\newcolumntype{d}[1]{D{.}{.}{#1}}
\newcolumntype{Y}{>{\raggedleft\arraybackslash}X}
\newcolumntype{Z}{>{\centering\arraybackslash}X}
\newtheorem{lemma}{Lemma}
\newtheorem{assum}{Assumption}
\newtheorem{thm}{Theorem}
\newtheorem{remark}{Remark}
\DeclareMathOperator*{\diag}{diag}
\newcommand{\bm}[1]{\mbox{\boldmath{$#1$}}}
\numberwithin{equation}{section}
\begin{document}

\title{\vspace{-10mm} High-Frequency-Based Volatility  Model with Network Structure}
\author{Huiling Yuan$^a$,
	Guodong Li$^a$,
	Junhui Wang$^b$\footnote {Corresponding author: Junhui Wang. Address: 83 Tat Chee Avenue, Kowloon Tong, Hong Kong. Tel: 852-3442 2153. E-mail: j.h.wang@cityu.edu.hk.} \\
	\\$^a$ \small{Department of Statistics \& Actuarial Science,
		The University of Hong Kong, Hong Kong, China}\\
	$^b$ \small { School of Data Science, City University of Hong Kong, Hong Kong, China}\\
}

\date{}
\maketitle

\begin{abstract}
	
This paper introduces one new multivariate volatility model that can accommodate an appropriately defined network structure based on low-frequency and high-frequency data.
The model reduces the number of unknown parameters and the computational complexity substantially. The model parameterization and iterative multistep-ahead forecasts are discussed and the targeting reparameterization is also presented.  Quasi-likelihood functions for parameter estimation are proposed and their asymptotic properties are established. A series of simulation experiments are carried out to assess the performance of the estimation in finite samples. An empirical example is demonstrated that the proposed model outperforms the network GARCH model, with the gains being particularly significant at short forecast horizons.

\end{abstract}

\textit{Keywords and phrases:} Network structure; Multistep-ahead forecasts;  Quasi-maximum likelihood estimators; Volatility prediction power.

\textit{MOS subject classification:} 62M10, 62M20, 62F12
	
\newpage

\section{Introduction}

Volatility analysis is an important issue in modern financial markets. As such, a great many researchers have focused on developing and evaluating volatility models. The natural information source of the volatility is the historical data of the security, which can be further divided into low-frequency and high-frequency historical data. The low-frequency historical data are referred to as observed prices of the security daily or at longer time horizons.  The autoregressive conditional heteroskedasticity (ARCH)  in \cite{Engle:1982} and the generalized autoregressive conditional heteroskedasticity (GARCH) models in \cite{Bollers:1986} are the most famous models for the analysis of low-frequency data.  The high-frequency data are referred to as the intra-day prices of the security, such as tick-by-tick, 1-second, 5-minute data and etc. The researchers often model the high-frequency historical data by continuous-time It\^{o} processes and develop realized volatility estimators. These estimators include two-time scale realized volatility  \citep{Zh:2005}, multi-scale realized volatility (MSRV) \citep{Zh:2006}, kernel realized volatility  \citep{B:2008}, pre-averaging realized volatility \citep{J:2009} and quasi-maximum likelihood estimator  \citep{X:2010}.

However, these models were developed for high-frequency and low-frequency data quite separately. In fact, high-frequency and low-frequency data must be inter-related at different time scales, due to just these different time scales. There are some attempts to bridge the gap between high-frequency and low-frequency data. \cite{Wang:2002} studied the statistical relationship between the GARCH and diffusion model; \cite{Engle:2006} proposed the multiplicative error model;  \cite{Ghysels:2006} studied the Mixed Data Sampling model (MIDAS);  \cite{Corsi:2009} provided Heterogeneous Autoregressive model for Realized Volatility (HAR-RV) model;  \cite{Shephard:2010} proposed the high-frequency-based volatility models,  they call `HEAVY models' (High-frEquency-bAsed VolatilitY models) for simplicity; \cite{Hansen:2012} studied volatilities analysis by combining the realized GARCH model and the high-frequency volatility model;  \cite{KW:2016} proposed GARCH-It\^o model for merged low-frequency data and high-frequency data.  \cite{song2020realized} and \cite{Yuancui:2022} extended GARCH-It\^o model to Realized GARCH-It\^o model  and GQARCH-It\^o model, respectively.

All the findings show that the volatility models combining low-frequency and high-frequency data have stronger forecasting power than those based on one information source. 
Among them, \cite{Shephard:2010} proposed the high-frequency-based volatility (HEAVY) model, which is designed to harness high-frequency data to make multistep-ahead predictions of the volatility of returns. Empirical studies show that the HEAVY model outperforms the GARCH model in- and out-of-sample performance. However, portfolio optimization and risk management in the financial markets often require cross-sections of hundreds of different stocks. This led to the development of multivariate high-frequency-based volatility (multivariate HEAVY) model \citep{Noureldin:2012}. They discussed the differences between multivariate HEAVY  and multivariate GARCH models, and empirical results suggest that the multivariate HEAVY model outperforms the multivariate GARCH model. Moreover, \cite{JinandMaheu:2013}  provided a new approach to improve the density forecasts of multivariate daily returns, however, the multivariate HEAVY model is much easier to estimate and allows for straightforward out-of-sample model evaluation since the multivariate HEAVY model provides closed-form forecasting formulas.

Due to the structure of the multivariate HEAVY model, it also suffers from the ``curse of dimensionality", not only in terms of the number of parameters in the model but also in the dimension of the realized measure required to drive the dynamics.  \cite{sheppard:2019} proposed the factor HEAVY models. Dimension reduction for estimating large covariance matrices is achieved by imposing a factor structure with time-varying conditional factor loadings. They also compared the factor HEAVY and DCC GARCH models by empirical studies. In this work, we develop a new method for a multivariate HEAVY model, based on network structure data. We call the proposed model the network HEAVY model.

As we all know, in the real world, many systems can be described by complex networks \citep{RekaandBarabasi:2002, Newman:2003}. In essence, a financial market can be represented as a network, where nodes represent financial entities such as stocks, and the edges connecting them represent the correlations between their returns \citep{Mantegna:1999, Bonanno:2001, Huang:2009, Tumminello:2010}.  Empirical data analysis also shows that model performance can be improved significantly by incorporating network structure information \citep{Goel:2014, Nitzan:2011, Wei:2016}.
Moreover, in order to incorporate the network information among individuals, \cite{Zhu:2017} developed a network vector autoregression (NAR) model. The response of each individual can be explained by its lagged values, the average of its neighbors, and a set of node-specific covariates. However, all the individuals are assumed to be homogeneous since they share the same autoregression coefficients. To express individual heterogeneity, \cite{Zhu:2020} developed a grouped NAR (GNAR) model.  Individuals in a network can be classified into different groups, characterized by different sets of parameters.  In order to overcome the high-dimensionality, \cite{Zhou:2020} proposed a network GARCH  model that used information derived from an appropriately defined network structure. This reduces the number of parameters and the computational complexity of the network GARCH model.

Inspired by the existing literature, the network HEAVY model is proposed.  A financial market index is considered. For example,  CSI300 Index is a stock market index that measures the stock performance of 300 large companies listed on stock exchanges in the Shanghai and Shenzhen stock markets in China. It is one of the most commonly followed equity indices, and one of the best representatives of the China stock markets. To describe the network structure, we define an adjacency matrix $A=(a_{i,j}), i, j=1, \cdots, N$, where $a_{i,j}=1$, if the  $i$-th stock  and the $j$-th stock belong to the same economic sector, and $a_{i,j}=0$ otherwise.

The proposed model is different from the existing ones in the following ways. First, in contrast to the traditional multivariate HEAVY model, the network HEAVY model uses information on the stocks by defining an appropriate network structure. Secondly, the computational complexity drops from  $\mathcal{O}(N^2)$ to $\mathcal{O}(N)$. Thirdly, a quasi-maximum likelihood estimator (QMLE) is given to estimate the network HEAVY model, and its asymptotic properties are also established. Fourthly, compared to the network GARCH model, the network HEAVY model has stronger volatility forecasting power for one-day volatility forecasting, two-day volatility forecasting, or forecasting at longer periods.

This paper is organized as follows. Section 2 introduces the network HEAVY model, as well as the model parameterization, iterative multistep-ahead forecasts, and targeting reparameterization. Section 3 derives the asymptotic properties. Section 4 conducts a series of simulation experiments to check the finite sample performance of the proposed methodology.
Section 5 carries out an empirical analysis with 18 constituent stocks from the CSI 300 to demonstrate the usefulness of the proposed model in short-term volatility forecasting.
Section 6 concludes the paper. All the proofs are relegated to Appendix.

\section{Network HEAVY model }

The standard HEAVY models are designed to harness high-frequency data to make multistep-ahead predictions of the volatility of returns. These models allow for both mean reversion and momentum,
\begin{align}
	\mathrm{var}(r_{t}|\mathcal{F}_{t-1}^{\rm HF})=h_t=&\omega+\alpha {\rm RM}_{t-1}+\beta h_{t-1}, \ \omega, \alpha \geq 0, \  \beta \in [0, 1), \label{HEAVY-r}\\
	\mathrm{E}({\rm RM}_{t}|\mathcal{F}_{t-1}^{\rm HF})=\mu_{t}=&\omega_{R}+\alpha_{R} {\rm RM}_{t-1}+\beta_{R} \mu_{t-1}, \ \omega_R, \alpha_R, \beta_R \geq 0, \ \alpha_R + \beta_R \in [0, 1), \label{HEAVY-rm}
\end{align}
where $\{r_{t}\}$ is the daily financial return sequence and $\{{\rm RM}_t\}$ is a corresponding sequence of daily realized measures. $\mathcal{F}_{t-1}^{\rm HF}$ denotes the past information of $r_{t}$ and ${\rm RM}_{t}$, and the sup-script $\rm HF$ denotes the high-frequency dataset. Particularly, (\ref{HEAVY-r}) is called the HEAVY-r model and (\ref{HEAVY-rm}) the HEAVY-RM model. It is noted that the  HEAVY model is non-nested.

Supposed that there are $N$ stocks, and let $r_{it}$ denote the return of $i$-th stock at time $t$, where $i=1,\ldots,N$ and $t=1,\ldots,T$. We further define an adjacency matrix $A=(a_{ij}) \in \mathbb{R}^{N\times N}$, where  $a_{ij}=a_{ji}= 1$ if  $i$-th and $j$-th stocks are related, and $a_{ij}=0$ otherwise. For example, two stocks are connected if they belong to the same economic sector. For convenience, we set $a_{ii}=0$  for $1 \leq i \leq N$. To model this network structure among stocks, we proposed the network HEAVY (NHEAVY) model,
\begin{align}
	\mathrm{var}(r_{it}|\mathcal{F}_{t-1}^{\rm HF}) = h_{it}=&\omega+\alpha {\rm RM}_{it-1}+\lambda d_{i}^{-1}\sum_{j\neq i}a_{ij}{\rm RM}_{jt-1}+\beta h_{it-1}, 	\label{NHEAVY-r}\\
	\mathrm{E}({\rm RM}_{it}|\mathcal{F}_{t-1}^{\rm HF}) = \mu_{it}=&\omega_{R}+\alpha_{R} {\rm RM}_{it-1}+\lambda_{R} d_{i}^{-1}\sum_{j\neq i}a_{ij}{\rm RM}_{jt-1}+\beta_{R} \mu_{it-1}, 
	\label{NHEAVY-rm}
\end{align}
where $\sum_{j\neq i}$ represents $\sum_{j=1, j\neq i}^{N}$, and
$d_{i}=\sum_{j=1}^{N}a_{ij}$ is the total number of neighbors of the $i$-th stock, which is the out-degree. In contrast to the HEAVY model, the $i$-th stock is assumed to be affected by its directly connected neighbors, which is particularly reasonable in practice. Similarly, we call (\ref{NHEAVY-r}) NHEAVY-r model and (\ref{NHEAVY-rm}) NHEAVY-RM model. Here, $\lambda$ and $\lambda_{R}$ are used to capture the average influence of other stocks on the $i$-th stock over the network structure. Note that the NHEAVY model only requires $\mathcal{O}(N)$ parameters to model the dependency structure among $N$ stocks, similar discussions have been introduced in the network vector autoregression (NAR) model \citep{Zhu:2017} and the network GARCH model \citep{Zhou:2020}. 

Moreover, we discuss the stationary condition for the proposed NHEAVY model. 
Define $\bm{r}_t^2=({r}_{1t}^2, \ldots, {r}_{Nt}^2)^{T}$, $\mathbf{RM}_t=({\rm RM}_{1t}, \ldots, {\rm RM}_{Nt})^{T}$, $\mathbf{h}_t=(h_{1t}, \ldots, h_{Nt})^{T}$, $\bm {\mu}_t=(\mu_{1t}, \ldots, \mu_{Nt})^{T}$, $\mathbf{ D}=\diag(d_{1}, \ldots, d_{N})$, $\bm {\omega}=(\omega, \ldots, \omega)^{T}$, $\bm {\omega}_{R}=(\omega_{R}, \ldots, \omega_{R})^{T}$, $\bm {\beta}=\diag (\beta, \ldots, \beta)$, $\bm {\beta}_{R}=\rm diag (\beta_{R}, \ldots, \beta_{R})$, $\bm {\alpha}=\diag(\alpha, \ldots, \alpha)$, $\bm {\alpha}_{R}=\diag(\alpha_{R}, \ldots, \alpha_{R})$,
$\bm{ \lambda}=\diag(\lambda, \ldots, \lambda)$, $\bm {\lambda}_{R}=\rm diag(\lambda_{R} \ldots, \lambda_{R})$, and $\mathbf {A}=(a_{i_{1} i_{2}}) \in \mathbb{R}^{N \times N}$. 
Following the NHEAVY representation in (\ref{NHEAVY-r}) and (\ref{NHEAVY-rm}), the dynamic structure of the bivariate model can be gleaned from rewriting
\begin{align}
	\begin{pmatrix}
		\mathbf {h}_t\\
		\bm {\mu}_t
	\end{pmatrix}
	&=\mathbf {w}+\mathbf {B}\begin{pmatrix}
		\mathbf {h}_{t-1}\\
		\bm {\mu}_{t-1}
	\end{pmatrix}+\begin{pmatrix}
		\bm{ \alpha}+\bm {\lambda} \mathbf {D}^{-1}\mathbf {A}&\quad \mathbf{0}\\
		\mathbf{0}&\quad \bm {\alpha}_{R}+\bm{ \lambda}_{R} \mathbf {D}^{-1}\mathbf {A}
	\end{pmatrix}(\mathbf{RM}_{t-1}-\bm {\mu}_{t-1}),
\label{vector-form}
\end{align}
where $\mathbf {w}= \begin{pmatrix}
	\bm{ \omega}\\
	\bm {\omega}_{R}
\end{pmatrix}$, 
$\mathbf {B}= \begin{pmatrix}
	\bm{ \beta }& \quad \bm {\alpha} +\bm {\lambda }\mathbf {D}^{-1}\mathbf {A}\\
	\mathbf{0} & \quad \bm {\alpha}_{R}+\bm {\lambda}_{R} \mathbf {D}^{-1}\mathbf {A}+\bm {\beta}_R
\end{pmatrix}$.

 Let $\bm {\varepsilon}_{t}= ({\varepsilon}_{1t}, \ldots, {\varepsilon}_{Nt})^{\top}$, ${\varepsilon}_{it}$ is i.i.d with $\mathrm{E} \{{ \varepsilon}_{it}|\mathcal {F}_{t-1}^{\rm HF}\}=1$  and $\mathrm{E} \{{ \varepsilon}_{it}^2|\mathcal {F}_{t-1}^{\rm HF}\}={{\kappa}}_2^{r}$. Similarly, ${\bm \epsilon}_{t}=(\epsilon_{1t}, \ldots, \epsilon_{Nt})^{\top}$, ${ \epsilon}_{it}$ is i.i.d with $\mathrm{E} \{{ \epsilon}_{it}|\mathcal {F}_{t-1}^{\rm HF}\}= {1}$  and $\mathrm{E} \{{ \epsilon}_{it}^2|\mathcal {F}_{t-1}^{\rm HF}\}={ {\kappa}}_2^{R}$.  Furthermore, $\mathrm{E} \{{ \varepsilon}_{it}{ \epsilon}_{it}|\mathcal {F}_{t-1}^{\rm HF}\}={{\kappa}}_2^{r,R}$, $\mathrm{E} \{{ \varepsilon}_{it}{ \epsilon}_{jt}|\mathcal {F}_{t-1}^{\rm HF}\}=0, i \neq j$.
Denote ${r}_{it}^2=\varepsilon_{it}{h}_{it}$ \citep{Engle:2006}, ${\rm RM}_{it}^2=\epsilon_{it}{\mu}_{it}$ \citep{ Shephard:2010}. Let $\mathbf{u}_t=({u}_{1t}, \ldots, {u}_{Nt})^{\top}$, ${u}_{it}={r}_{it}^2- {h}_{it}=({\varepsilon}_{it}-1){h}_{it}$, and ${\mathbf{u}_{R}}_{t}= ({{u}_{1Rt}}, \ldots, {{u}_{NRt}})^{\top}$, ${{u}_{iRt}}={\rm RM}_{it}-{\mu}_{it}=(\epsilon_{it}-1){\mu}_{it}$. Then, both $\{{u}_{it}\}$ and $\{{u}_{iRt}\}$ can be viewed as i.i.d ``noise" with mean zero, and finite variance.

From (\ref{vector-form}), a biproduct of the process ${\bm{r}}_t^2$ and $\mathbf{RM}_t$ can be the VARMA(1,1) representation,
\begin{align}
		\begin{pmatrix}
			\bm{r}_t^2\\
			\mathbf{RM}_t
		\end{pmatrix}=&\mathbf {w}+\mathbf {B}
		\begin{pmatrix}
			\bm{r}_{t-1}^2\\
     	\mathbf{RM}_{t-1}
	\end{pmatrix}
+\begin{pmatrix}
	(\mathbf{1}_N-\beta \mathbf{1}_N L)\mathbf{u}_t\\
(\mathbf{1}_N-\beta_{R} \mathbf{1}_N L){\mathbf{u}_{R}}_{t}
\end{pmatrix},\label{VARMA}
\end{align}
where $\mathbf{1}_N=(1,\ldots, 1)^{\top}$, and $L$ is the lag operator.

By the VARMA(1,1) representation in (\ref{VARMA}), we immediately have the necessary and sufficient condition for the existence of a unique strictly stationary solution to the NHEAVY models. Strictly speaking, the eigenvalues of $\mathbf{B}$ must be less than one in modulus. Since $\mathbf{B}$ is block triangular, its eigenvalues are members of the multiset of the eigenvalues of $\bm{ \beta }$ and $\bm {\alpha}_{R}+\bm {\lambda}_{R} \mathbf {D}^{-1}\mathbf {A}+\bm {\beta}_R$. Suppose that $\lambda_d$ is any arbitrary eigenvalue of $\mathbf{D}^{-1} \mathbf {A}$. Since $\mathbf{D}^{-1} \mathbf {A}=\left(a_{i j} / d_{i}\right)$, by the Gershgorin circle theorem on the eigenvalues of matrices and by the definition of $d_{i}$, we can get that $|\lambda_d| \leq 1$. Thus, the largest eigenvalue in modular of $\bm {\alpha}_{R}+\bm {\lambda}_{R} \mathbf {D}^{-1}\mathbf {A}+\bm {\beta}_R$ is smaller than $\alpha_{R}+\lambda_{R}+\beta_{R}$, i.e., $\rho\left(\bm {\alpha}_{R}+\bm {\lambda}_{R} \mathbf {D}^{-1}\mathbf {A}+\bm {\beta}_R\right) \leq \alpha_{R}+\lambda_{R}+\beta_{R}<1$. The largest eigenvalue of $\bm{ \beta }$,  i.e., $\rho(\bm \beta)=\beta<1$. 

Multistep-ahead forecasts of volatility are very important for asset allocation and risk assessment, since these tasks are usually carried out over multiple days. In contrast to one-step-ahead forecasts of volatility that only require the NHEAVY-r model, both NHEAVY-r and NHEAVY-RM models play a central role in multistep-ahead forecasts.

\begin{remark}
Let $\mathbf {I}$ denote an $N\times N$ identity matrix,  for $s\geq 0$, from the dynamic structure representation (\ref{vector-form}), we have the multistep-ahead forecasts as follows,
\begin{align}
	\begin{pmatrix}
		\mathbf {h}_{t+s|t-1}\\
		\bm {\mu}_{t+s|t-1}
	\end{pmatrix}=(\mathbf {I}+\mathbf {B}+\cdots+\mathbf {B}^{s})\mathbf {w}+\mathbf {B}^{s+1}\begin{pmatrix}
		\mathbf {h}_{t-1}\\
		\bm {\mu}_{t-1}
	\end{pmatrix},
	\label{multi-step forecasting}
\end{align}
Write
$\mathbf {v}_{1}=\bm{ \alpha}+\bm {\lambda} \mathbf {D}^{-1}\mathbf {A}$ and $\mathbf {v}_{2}=\bm {\alpha}_{R}+\bm{ \lambda}_{R} \mathbf {D}^{-1}\mathbf {A}+\bm {\beta}_R$, then for $J=1, 2, \ldots, $
\begin{align*}
	\mathbf {B}^{J}=\begin{pmatrix}
		\bm {\beta}^{J} &\quad \mathbf {v}_{1} (\mathbf {v}_{2}^{J-1}+\mathbf {v}_{2}^{J-2}\bm {\beta}+\dots+\bm {\beta}^{J-1})\\
		\mathbf{0}&\quad \mathbf {v}_{2}^{J}
	\end{pmatrix}.
\end{align*}  
\label{remark1}
\end{remark}
Similar to the HEAVY model, it is advantageous to re-parameterize the NHEAVY model, therefore, the intercepts are explicitly related to the unconditional mean of squared returns and realized measures. Let $\bm \mu=(\mu_1, \ldots, \mu_N)^{\top}$ and $\bm {\mu}_{R}=({{\mu}_{R}}_{1}, \ldots,  {{\mu}_{R}}_{N} )^{\top}$, where $\mu_i=\mathrm{E}({r}_{it}^2)$ and ${{\mu}_{R}}_{i}=\mathrm{E}({\rm{RM}}_{it})$.
\begin{remark}
The ``targeting parameterization" for the NHEAVY model can be provided as follows,
\begin{align}
			&\mathbf {h}_t
		=(\mathbf{ I}-(\bm \alpha+\bm \lambda \mathbf {D}^{-1}\mathbf {A })\bm \kappa-\bm \beta)\bm \mu+(\bm \alpha+\bm \lambda \mathbf {D}^{-1}\mathbf {A} )  \mathbf{RM}_{t-1}+\bm \beta \mathbf {h}_{t-1}, \label{target-1}\\
		&\bm {\mu}_{t}
		=(\mathbf {I}-\bm {\alpha}_{R} -\bm {\lambda}_{R} \mathbf {D}^{-1}\mathbf {A}-\bm {\beta}_{R})\bm {\mu}_{R}+(\bm {\alpha}_{R}+\bm {\lambda}_{R} \mathbf {D}^{-1}\mathbf {A}) \mathbf{RM}_{t-1}+\bm {\beta}_{R} \bm {\mu}_{t-1}, \label{target-2}
\end{align}
where $\bm{ \kappa}=\diag(\kappa_1, \ldots, \kappa_N)^{\top}$, $\kappa_i= {{\mu}_{R}}_{i}\mu_{i}^{-1}$.
\label{remark2}
\end{remark}

Using (\ref{target-1}) and (\ref{target-2}), it is easier to impose the condition: the eigenvalues of $(\bm \alpha+\bm \lambda \mathbf {D}^{-1}\mathbf {A })\bm \kappa+\bm \beta$ must be less than one in modulus. By the arguments of block triangular and Gershorin circle theorem, $\rho ((\bm \alpha+\bm \lambda \mathbf {D}^{-1}\mathbf {A })\bm \kappa+\bm \beta)\leq \alpha+\lambda \kappa_{(N)}+\beta<1$, where $\kappa_{(N)}$ is the maximum of $\kappa_{i}$ for $i=1,\ldots, N$.

 It is worth mentioning that the ``targeting parameterisation" can make use of the estimations of $\bm {\mu}_{R}$, $\bm{ \mu}$ and $\bm {\kappa}$, where
\begin{align*}
\widehat{ {\mu}}_{i}=\frac{1}{T}\sum_{t=1}^{T} {r}_{it}^2,\quad \quad	\widehat{ {\mu}}_{Ri}=\frac{1}{T}\sum_{t=1}^{T}{\rm{RM}}_{it},   \quad \quad \widehat{ \kappa}_{i}=\widehat{ \mu}_{Ri}\widehat{ \mu}_i^{-1}.
\end{align*}
When these estimators are plugged into the quasi-likelihood functions, the optimization tasks can be substantially simplified, but it does alter the resulting asymptotic standard errors. 

\section{Quasi-maximum likelihood Estimation}

Assume that returns $(\bm{r}_1^2, \ldots, \bm{r}_T^2)$ and the related realized measures $(\mathbf{RM}_1, \ldots, \mathbf{RM}_T)$ from the model (\ref{VARMA}). Let $\phi_0=(\omega_0, \alpha_0, \lambda_0, \beta_0)^{\top}$, ${\phi_R}_0=({\omega_R}_0, {\alpha_R}_0,  {\lambda_R}_0, {\beta_R}_0)^{\top}$. Denote $\theta_0=( \phi_0^{\top}, {\phi_{R}}_0^{\top})^{\top}\in \mathbb{R}^{8}$  is the true value, $\theta=( \phi^{\top}, \phi_{R}^{\top})^{\top}\in \mathbb{R}^{8}$ is the parmeter.
When modeling inference, we will regard the parameters as having no link between the NHEAVY-r and NHEAVY-RM models, i.e. $\phi^{\top}$ and $ \phi_R^{\top}$ are variation free, which is important in the one-step estimation for the NHEAVY model. 

The quasi-log-likelihood function (ignoring the constants) of returns $\{r_{it}^{2}\}$ is given by 
\begin{align}
	\widehat {L}^{r}(\phi)=\frac{1}{T}\sum_{t=2}^{T}\widehat {\ell}_{t}^{r}(\phi) \quad {\rm with} \quad \widehat {\ell}_{t}^{r}(\phi)=\frac{1}{N}\sum_{i=1}^{N}\left\{\log \widehat {h}_{it}(\phi)+\frac{r_{it}^{2}} {\widehat {h}_{it}(\phi)}\right\},
	\label{likehood-return}
\end{align}
where $\widehat {h}_{it}(\phi)$ is defined recursively for $t>1$ by 
\begin{align*}
	 \widehat {h}_{it}(\phi)=\omega+\alpha {\rm RM}_{it-1}+\lambda d_{i}^{-1}\sum_{j\neq i}a_{ij}{\rm RM}_{jt-1}+\beta \widehat {h}_{it-1}(\phi),
	\end{align*}
with $\widehat {h}_{i1}(\phi)=T^{-1/2}\sum_{t=1}^{(\sqrt{T})}r_{it}^2$.

The quasi-log-likelihood function (ignoring the constants) of realized measures $\{{\rm RM}_{it}\}$ is given by 
\begin{align}
	\widehat {L}^{\rm RM}(\phi_R)=\frac{1}{T}\sum_{t=2}^{T}\widehat {\ell}_{t}^{\rm RM}(\phi_R) \quad {\rm with} \quad \widehat {\ell}_{t}^{\rm RM}(\phi_R)=\frac{1}{N}\sum_{i=1}^{N}\left\{\log \widehat{\mu}_{it}(\phi_R)+\frac{{\rm RM}_{it}}{\widehat{\mu}_{it}(\phi_R)}\right\},
	\label{likelihood-RM}
\end{align}
where $\widehat{\mu}_{it}(\phi_R)$ is defined recursively for $t>1$ by 
\begin{align*}
\widehat{\mu}_{it}(\phi_R)=\omega_{R}+\alpha_{R} {\rm RM}_{it-1}+\lambda_{R} d_{i}^{-1}\sum_{j\neq i}a_{ij}{\rm RM}_{jt-1}+\beta_{R} \widehat {\mu}_{it-1}(\phi_R),
\end{align*}
with $\widehat {\mu}_{i1}(\phi_R)=T^{-1/2}\sum_{t=1}^{(\sqrt{T})}{\rm RM}_{it}$.

Then the QMLE of $\phi$ is defined as 
\begin{align*}
	\widehat{\phi}=(\widehat{\omega}, \widehat{\alpha}, \widehat{\lambda}, \widehat{\beta})^{\top}=\arg \min_{\phi\in \Phi} 	\widehat {L}^{r}(\phi),
\end{align*}
where $\Phi$ is the parameter space of $\phi$.
And the QMLE of $\phi_R$ is defined as 
\begin{align*}
	\widehat{\phi}_R=(\widehat{\omega}_R, \widehat{\alpha}_R,  \widehat{\lambda}_R, \widehat{\beta}_R)^{\top}=\arg \min_{\phi_R\in \Phi_R} 	\widehat {L}^{\rm RM}(\phi_R),
\end{align*}
where $\Phi_R$ is the parameter space of $\phi_R$. 

Define $\widehat {\theta} =(\widehat{\phi}^{\top}, \widehat{\phi}_R^{\top})^{\top}$.
To discuss the asymptotic properties of $\widehat {\theta}$, we fist study the asymptotic properties of $\widehat{\phi}$ and $\widehat{\phi}_R$. It is convenient to approximate the sequences $\{ \widehat{h}_{it}(\phi)\}$ and $\{\widehat{\mu}_{it}(\phi_R)\}$ by the ergodic stationary sequence $\{ {h}_{it}(\phi)\}$ and $\{{\mu}_{it}(\phi_R)\}$ as follows,
\begin{align*}
	 {h}_{it}(\phi)=&\omega+\alpha {\rm RM}_{it-1}+\lambda d_{i}^{-1}\sum_{j\neq i}a_{ij}{\rm RM}_{jt-1}+\beta  {h}_{it-1}(\phi),\\
	{\mu}_{it}(\phi_R)=&\omega_{R}+\alpha_{R} {\rm RM}_{it-1}+\lambda_{R} d_{i}^{-1}\sum_{j\neq i}a_{ij}{\rm RM}_{jt-1}+\beta_{R}{\mu}_{it-1}(\phi_R),
\end{align*}
for any $t$ and each $i$. Similarly to the definition of $\widehat {L}^{r}(\phi)$, $\widehat {l}_{t}^{r}(\phi)$, $	\widehat {L}^{\rm RM}(\phi_R)$, and $\widehat {l}_{t}^{\rm RM}(\phi_R)$, we can define
\begin{align}
	&{L}^{r}(\phi)=\frac{1}{T}\sum_{t=2}^{T} {\ell}_{t}^{r}(\phi) \quad {\rm with} \quad  {\ell}_{t}^{r}(\phi)=\frac{1}{N}\sum_{i=1}^{N}\left\{\log  {h}_{it}(\phi)+\frac{r_{it}^{2}} { {h}_{it}(\phi)}\right\},\label{likehood-return-true}\\
	 &{L}^{\rm RM}(\phi_R)=\frac{1}{T}\sum_{t=2}^{T} {\ell}_{t}^{\rm RM}(\phi_R) \quad {\rm with} \quad  {\ell}_{t}^{\rm RM}(\phi_R)=\frac{1}{N}\sum_{i=1}^{N}\left\{\log {\mu}_{it}(\phi_R)+\frac{{\rm RM}_{it}}{{\mu}_{it}(\phi_R)}\right\},
	\label{likelihood-RM-true}
\end{align}

Before stating our main results, we give the following assumptions that are standard in studying quasi-maximum likelihood estimation.
\begin{assum}
	The parameter space $\Theta$ is a compact subset of $\{\theta: \omega >0, \alpha>0, \lambda >0, \omega_{R} >0, \alpha_{R}>0, \lambda_{R}>0, \beta_{R}>0,  0<\beta <1, \alpha+\lambda \kappa_{(N)}+\beta<1, \alpha_{R}+\lambda_{R}+\beta_{R}<1\}$ and $\theta_{0}\in \Theta$.
	\label{assumption1}
\end{assum}
\begin{assum}
	\label{assumption2}
Both $\{{u}_{it}\}$ and $\{{{u}_{Rt}}_{i}\}$ in the model (\ref{VARMA}),  are i.i.d. across $i$ and $t$ with zero mean and finite variance. Moreover, assume $\{{u}_{it}\}$ and $\{{{u}_{Rt}}_{i}\}$ are nondegenerate.
\end{assum}
The strong consistency and asymptotic normality of the QMLE $\widehat{\theta}$ for the NHEAVY model can be stated in the following theorem.
\begin{thm}
If Assumption \ref{assumption1} and  \ref{assumption2} hold,  then $\widehat{\theta}\rightarrow \theta_{0}$  in almost surely as $T \rightarrow \infty$.  Furthermore, if $\theta_{0}$ is an interior point of $ \Theta$, we have
\begin{align}
	\sqrt{NT}\left(\widehat{\theta}-\theta_{0}\right) \stackrel{d}{\longrightarrow} N\left(0, \mathcal{I}^{-1} \mathcal{J}(\mathcal{I}^{-1})^{\top}\right)
	\label{one-step-estimation}
\end{align}
where
\begin{align*}
\mathcal{I}= \begin{pmatrix}
		&{\bm\Sigma_{r}}& \mathbf{0}\\
		&\mathbf{0} & {\bm\Sigma_{R}}
	\end{pmatrix},
	\quad
	 \mathcal{J}=\begin{pmatrix}
		&\left(\kappa_{2}^r-1\right) {\bm\Sigma_{r}}& \left(\kappa_{2}^{r,R}-1\right) {\bm\Sigma^{r,R}}\\
		& \left(\kappa_{2}^{r,R}-1\right) {\bm\Sigma^{R,r}}&(\kappa_{2}^{R}-1) {\bm\Sigma_{R}}
	\end{pmatrix}.
\end{align*}
where $\bm\Sigma^r=\frac{1}{N} \sum_{i=1}^{N} {\mathrm E}\left(\frac{1}{h_{i t}^2(\phi_0)} \frac{\partial h_{i t}\left(\phi_{0}\right)}{\partial \phi} \frac{\partial h_{i t}\left(\phi_{0}\right)}{\partial \phi^{\top}}\right)$, $\bm\Sigma^{r,R}=\frac{1}{N} \sum_{i=1}^{N} {\mathrm E}\left(\frac{1}{h_{i t}(\phi_0)u_{i t}({\phi_R}_0)} \frac{\partial h_{i t}\left(\phi_{0}\right)}{\partial \phi} \frac{\partial u_{i t}\left({\phi_R}_{0}\right)}{\partial \phi_{R}^{\top}}\right)$, $\bm\Sigma^R=\frac{1}{N} \sum_{i=1}^{N} {\mathrm E}\left(\frac{1}{u_{i t}^2({\phi_R}_0)} \frac{\partial u_{i t}\left({\phi_R}_{0}\right)}{\partial \phi_R} \frac{\partial u_{i t}\left({\phi_R}_{0}\right)}{\partial \phi_{R}^{\top}}\right)$,  $\bm\Sigma^{R,r}=\frac{1}{N} \sum_{i=1}^{N} {\mathrm E}\left(\frac{1}{h_{i t}(\phi_0)u_{i t}({\phi_R}_0)}  \frac{\partial u_{i t}\left({\phi_R}_{0}\right)}{\partial \phi_{R}}\frac{\partial h_{i t}\left(\phi_{0}\right)}{\partial \phi^{\top}}\right)$,
$\kappa_{2}^{r}={\mathrm E} (\varepsilon_{i t}^2)$, $\kappa_{2}^{R}={\mathrm E} (\epsilon_{i t}^2)$, and $\kappa_{2}^{r,R}={\mathrm E} (\varepsilon_{i t}\epsilon_{i t})={\mathrm E} (\epsilon_{i t}\varepsilon_{i t})$.
\end{thm}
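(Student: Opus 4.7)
The plan is to exploit that $\phi$ and $\phi_R$ are variation-free so the joint QMLE $\widehat{\theta}$ decouples into the two separate minimizations of $\widehat{L}^r(\phi)$ and $\widehat{L}^{\rm RM}(\phi_R)$. Each sub-problem follows the classical three-step QMLE program: (i) replace the initialized recursion $\widehat{h}_{it}$ by its strictly stationary version $h_{it}$ and bound the discrepancy uniformly in $\phi$; (ii) establish uniform convergence of $\widehat{L}^r(\phi)$ to a deterministic limit $L^r_\infty(\phi)$ on the compact set $\Phi$ together with an identifiability argument showing $\phi_0$ is its unique minimizer; (iii) combine (i)--(ii) with compactness of $\Theta$ to obtain strong consistency. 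Asymptotic normality then follows from the standard Taylor expansion of the score around $\phi_0$, and the joint asymptotic covariance is assembled by using the within-period correlation between $(\varepsilon_{it}-1)$ and $(\epsilon_{it}-1)$ to populate the off-diagonal blocks of $\mathcal{J}$.

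For step (i), iterating the recursion gives $\widehat{h}_{it}(\phi) - h_{it}(\phi) = \beta^{t-1}\bigl(\widehat{h}_{i1}(\phi) - h_{i1}(\phi)\bigr)$, so the uniform bound $\sup_{\phi\in\Phi}|\widehat{h}_{it}-h_{it}|\leq C\bar\beta^{t-1}$ with $\bar\beta = \sup_\Phi \beta <1$ yields $\sup_\Phi|\widehat{L}^r - L^r|\to 0$ almost surely, and similarly for $\widehat{L}^{\rm RM}$. For identifiability, I will use the elementary inequality $\log x + 1/x \geq 1$ with equality iff $x=1$, applied to $x = h_{it}(\phi_0)/h_{it}(\phi)$, to conclude $E[\ell_t^r(\phi)] \geq E[\ell_t^r(\phi_0)]$ with equality iff $h_{it}(\phi)=h_{it}(\phi_0)$ a.s. for every $i,t$. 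Unrolling the recursion produces the MA$(\infty)$ representation $h_{it}(\phi) = \omega/(1-\beta) + \sum_{k\geq 0}\beta^{k}\bigl[\alpha\,{\rm RM}_{it-1-k}+\lambda d_i^{-1}\sum_{j\neq i}a_{ij}{\rm RM}_{jt-1-k}\bigr]$; matching coefficients against the same expansion with $\phi_0$ (legitimate because Assumption~\ref{assumption2} makes the realized-measure innovations non-degenerate) identifies $\beta$ from the geometric decay, then $\alpha$ from the own-lag coefficient, then $\lambda$ from the neighbor-lag coefficient, which is nontrivially distinguishable provided the adjacency matrix has at least one edge, and finally $\omega$ from the intercept. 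The same program applied to $\widehat{L}^{\rm RM}$ yields $\widehat{\phi}_R\to{\phi_R}_0$ a.s.

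For asymptotic normality I Taylor expand $\partial\widehat{L}^r(\widehat{\phi})/\partial\phi = 0$ around $\phi_0$, giving
\begin{align*}
\sqrt{NT}(\widehat{\phi}-\phi_0) = -\Bigl(\tfrac{\partial^2 \widehat{L}^r(\widetilde\phi)}{\partial\phi\partial\phi^\top}\Bigr)^{-1}\sqrt{NT}\,\tfrac{\partial \widehat{L}^r(\phi_0)}{\partial\phi},
\end{align*}
and analogously for $\phi_R$. The Hessian converges a.s.\ to $\bm\Sigma^r$ (resp.\ $\bm\Sigma^R$) by consistency of $\widetilde\phi$, ergodicity, and a uniform integrable bound on $h_{it}^{-2}(\partial h_{it}/\partial\phi)(\partial h_{it}/\partial\phi)^\top$ together with the MA-tail approximation from step~(i). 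The score rewrites as
\begin{align*}
\sqrt{NT}\,\tfrac{\partial L^r(\phi_0)}{\partial\phi} = -\tfrac{1}{\sqrt{NT}}\sum_{t=2}^T\sum_{i=1}^N \tfrac{1}{h_{it}(\phi_0)}\tfrac{\partial h_{it}(\phi_0)}{\partial\phi}(\varepsilon_{it}-1),
\end{align*}
which is a martingale-difference array with respect to $\mathcal{F}_{t-1}^{\rm HF}$ because $E(\varepsilon_{it}\mid\mathcal{F}_{t-1}^{\rm HF})=1$. A martingale CLT then delivers asymptotic normality with variance $(\kappa_2^r-1)\bm\Sigma^r$. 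Stacking the two scores and using $E[(\varepsilon_{it}-1)(\epsilon_{it}-1)\mid\mathcal{F}_{t-1}^{\rm HF}] = \kappa_2^{r,R}-1$ while $E[(\varepsilon_{it}-1)(\epsilon_{jt}-1)]=0$ for $i\neq j$ reproduces the block $\mathcal{J}$ as stated; $\mathcal{I}$ is block-diagonal because $\phi$ appears only in $h_{it}$ and $\phi_R$ only in $\mu_{it}$.

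The main obstacle is the joint limit in $N$ and $T$: the innovations $(\varepsilon_{it},\epsilon_{it})$ are i.i.d.\ across $(i,t)$ by Assumption~\ref{assumption2}, but the weights $h_{it}^{-1}\partial h_{it}/\partial\phi$ are \emph{not} independent across $i$, since they share the neighbor-lag terms $d_i^{-1}\sum_{j\neq i}a_{ij}{\rm RM}_{jt-1-k}$ through the adjacency matrix. The clean way around this is to condition on $\mathcal{F}_{t-1}^{\rm HF}$, so that within each $t$ the cross-sectional sum becomes a conditionally centered sum of \emph{conditionally independent} terms whose conditional variance is computable in closed form, after which Lindeberg's condition can be verified via a uniform moment bound on $h_{it}^{-2}(\partial h_{it}/\partial\phi)^2$; this bound is feasible because Assumption~\ref{assumption1} forces $\omega>0$ (so $h_{it}$ is bounded away from $0$) and the MA coefficients decay geometrically in $\beta<1$.
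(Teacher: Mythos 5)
Your proposal follows essentially the same route as the paper's proof: decoupling the joint QMLE via variation-freeness, replacing the initialized recursions $\widehat{h}_{it},\widehat{\mu}_{it}$ by their stationary versions with a geometrically decaying discrepancy $\beta^{t-1}(\widehat{h}_{i1}-h_{i1})$, establishing uniform convergence and identifiability for strong consistency, and then a Taylor expansion of the score combined with a martingale CLT, with the off-diagonal blocks of $\mathcal{J}$ coming from the within-period correlation $\kappa_2^{r,R}$ of $(\varepsilon_{it},\epsilon_{it})$. The argument is correct and matches the paper's Lemmas 1--4 and their assembly; your explicit coefficient-matching identifiability step and the conditioning argument for the cross-sectionally dependent weights are details the paper delegates to \cite{Zhou:2020} or leaves implicit, but they sit inside the same proof architecture rather than constituting a different one.
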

The block diagonality of (\ref{one-step-estimation}) is due to the variation-free property of parameters, which is a straightforward application of quasi-likelihood theory and can be viewed as an extension of \cite{Bollers:2007}, and is also discussed extensively in \cite{Cipollini:2007}.

Then, we discuss the two-step estimation for the NHEAVY model. With the help of Remark \ref{remark2}, the targeting parameterization can be written in an alternative form,
\begin{equation}
	\begin{split}
	h_{it}=&(1-\alpha\kappa_{i}-\beta)\mu_{i}-\lambda d_{i}^{-1}\sum_{j\neq i}^{N}a_{ij}\mu_{jR}+\alpha {\rm RM}_{it-1}
	+\lambda d_{i}^{-1}\sum_{j\neq i}^{N}a_{ij}{\rm RM}_{jt-1}+\beta h_{it-1}, \\
	\mu_{it}=&(1-\alpha_{R}-\beta_{R})\mu_{iR}-\lambda_{R} d_{i}^{-1}\sum_{j\neq i}^{N}a_{ij}\mu_{jR}+\alpha_{R} {\rm RM}_{it-1}
	+\lambda_{R} d_{i}^{-1}\sum_{j\neq i}^{N}a_{ij}{\rm RM}_{jt-1}+\beta_{R} \mu_{it-1},
	\end{split}
\label{target-form}
\end{equation}
Model (\ref{target-form}) can make a convenient two-step approach. The unconditional moments, $\mu_i$ and ${\mu}_{Ri}$, will be estimated in the first step by 
\begin{align*}
	\widehat{ {\mu}}_{i}=\frac{1}{T}\sum_{t=1}^{T} {r}_{it}^2,\quad \quad	\widehat{ {\mu}}_{Ri}=\frac{1}{T}\sum_{t=1}^{T}{\rm{RM}}_{it}. 
\end{align*}
Defined $ \widehat{l}^{r}( \bm{\mu}, \bm{\mu}_{R}, \widehat {\phi})$ and $\widehat{l}^{\rm RM}( \bm{\mu}_{R}, \widehat{\phi}_{R})$ to be the observed  log-likelihood function for the covariance targeting HEAVY model. Then $\widehat {\phi}^{\top}$ and $\widehat{\phi}_{R}^{\top}$ will be estimated by QMLE in the second step.
\begin{align*}
	\widehat{\phi}=\mathop{\arg\min}_{\theta\in \Theta}\widehat {L}^{r}(\bm\mu, \bm\mu_{R}, \phi) \quad {\rm and} \quad \widehat{\phi}_R=\mathop{\arg\min}_{\theta_R \in \Theta_R}\widehat {L}^{\rm RM}( \bm\mu_{R}, \phi_R).
\end{align*}
With ``targeting parameterisation" in Remark \ref{remark2}, variation freeness between the parameters of the NHEAVY-r and NHEAVY-RM equations no longer holds since $\bm {\kappa}$ depends on ${\bm \mu}_R$.
And the asymptotic properties of parameter estimators refer to HAC estimator \citep{Noureldin:2012}.

\section{Simulation studies}

To evaluate the finite-sample performance of the proposed mothodology, we give three simulation experiments. For a given network structure $\mathbf {A}$ used in the experiments, the true high-frequency log price $\mathbf {X}(t_{l,m})=(X_{1}(t_{l,m}),\ldots, X_{N}(t_{l,m}))^{T}$ at $t_{{l,m}}=l-1+m/M, l=1,\ldots, L, m=1, \cdots, M$ is generated from the following model,
\begin{equation}
	d \mathbf {X} (t) = \bm {\mu}_t dt + \bm {\sigma}_t ^{\top}  d\mathbf {B}_t\text{,}
	\label{model 4.1}
\end{equation}
where ${\bm {\mu}}_t \in \mathbb {R}^{N}$ is the drift, $\bm {\sigma}_t\in \mathbb {R}^{N\times N}$ with $\bm {\gamma}(t)=\bm {\sigma}_t^{T}\bm {\sigma}_t $ being the volatility matrix of $X(t)$, $\mathbf {B}_t \in \mathbb {R}^{N}$ is a standard  $N$-dimensional Brownian motion. For convernience, let $\bm{ \mu}_t=0$ and the volatility $\bm{ \sigma}_t$ have a Cholesky decomposition,
\begin{equation*}
	\bm{ \gamma}(t)=(\gamma_{ij}(t)), \quad \gamma_{ij}(t)=\sqrt{\tau_i \tau_j }\kappa^{|i-j|},
\end{equation*}
where $\{\tau_i, i=1,\ldots,N\}$ are independently generated from a uniform distribution on $[0,1]$, and $\kappa$ is set at 0.5. The noisy observations $\{Y_{i}(t_{l}), i=1, \cdots, N\}$ from model (\ref{model 4.1}) are added mean zero normal random errors $\{\xi_{i}(t_{l}), i=1,\cdots, N\}$. In this noise suitation, we considered MSRV as the estimators of daily volatility following \cite{KW:2016}.

\textbf{Example 1} (Dyad Independence Model) A dyad is defined as $A_{i,j}=(a_{ij}, a_{ji})$ for any $1\leq i < j\leq N.$ Dyad independence assumes that different $A_{i,j}$ 's are independent. In order to ensure network sparsity, we set $P(A_{i,j}=(1,1))=20N^{-1}$. As a result, the expected number of mutually connected dyads is of $\mathcal{O}(N)$. Next, set $P(A_{ij}=(1,0))=P(A_{ij}=(0,1))=0.5N^{-0.8}$. The histograms of the out-degree and the in-degree of this network structure are shown in Figure \ref{fig_1}.
Fix $N=25, 50$ \rm or $100$, $T=50$ \rm or $100$ and $m=39, 78$ \rm  or $390$. The initial estimator for one-step method are $\phi_0=(0.005, 0.001,0.001,0.9)$ and ${\phi_R}_0=(0.005, 0.1,0.1,0.5)$,  and the initial estimator  for two -step method are $\phi_0=(0.01,0.001,0.7)$ and ${\phi_R}_0=(0.001, 0.01, 0.85)$ . 
All experiment are with 1000 replications, that is $Q=1000$. To evaluate the proposed estimators, the root-mean-square error is calculated by ${\rm RMSE}=\{Q^{-1}\sum\limits_{q=1}^{Q}(\widehat{\theta}_{i}^{(q)}-\theta_{i}^{(q)})^{2}\}^{1/2}, i=1,2,3,4 $ or $i=1,2,3$. And the network density (ND), i.e. ${N(N-1)}^{-1}\sum_{i,j}a_{i,j}$, is also reported.
The results are summaried in Tables \ref{simulation_Dayone} and \ref{simulation_Daytwo}.

\textbf{Example 2} (Power-Law Distribution Model) In the existing literature, a power-law distribution reflects a popular network phenomenon, that is, the majority of nodes have very few edges but a small amount has a huge number of edges. To mimic this phenomenon, we simulate adjacency matrix $\mathbf {A}$ as fellows. First, for each node, its in-degree satisfies $P(d_i=k)=ck^{-\alpha}$ for a normalizing constant $c$ and exponent parameter $\alpha\in{1,2,3}$. A smaller $\alpha$ value implies a heavier tail. Next, for the $i$-th node, we randomly select $d_i$ nodes to be its followers. $N, T$ and $m$ are set at same values in \textbf{Example 1}.  The histograms of the out-degree and the in-degree of this network structure are shown in Figure \ref{fig_1}.
The initial estimators of one-step method are $\phi_0=(0.0006, 0.1,0.005,0.8)$, ${\phi_R}_0=(0.05, 0.1,0.1,0.5)$, and the initial estimators  for two-step method are $\phi_0=(0.0015,0.01,0.72)$ and ${\phi_{R}}_{0}=(0.01,0.015,0.6)$.  The RMSE and ND are summaried in Tables \ref{simulation_Powerone} and  \ref{simulation_Powertwo}.

\textbf{Example 3} (Stochastic Block Model) We next consider another popular network structure, the stochastic block model. This model is of particular interest for community detection. Specially, we assign to a block label $(k=1,\cdots, K)$ with equal probability, where $K=5,10$ \rm or $20$ is the total number of blocks. Next, set $P(a_{ij}=1)=0.3N^{-0.3}$ if $i$ and $j$ belong to the same block, and $P(a_{ij}=1)=0.3N^{-1}$ otherwise. Accordingly, the nodes within the same block are more likely to be connected, compared with nodes from different blocks. $N, T$ and $m$ are set at the same values in \textbf{Examples 1} and \textbf{2}. The histograms of the out-degree and the in-degree of this network structure are presented in Figure \ref{fig_1}.
The initial estimators of one-step method are $\phi_0=(0.01, 0.001,0.02,0.8)$, ${\phi_R}_0=(0.006, 0.1,0.1,0.5)$, and the initial estimators of two-step method are $\phi_0=(0.005,0.01,0.75)$ and ${\phi_{R}}_0=(0.001,0.015,0.85)$ . The RMSE and ND are summaried in Tables \ref{simulation_SBone} and \ref{simulation_SBtwo}.

\begin{figure}[H]
	\centering
		\begin{minipage}[t]{9cm}
		\includegraphics[angle=0,width=9cm]{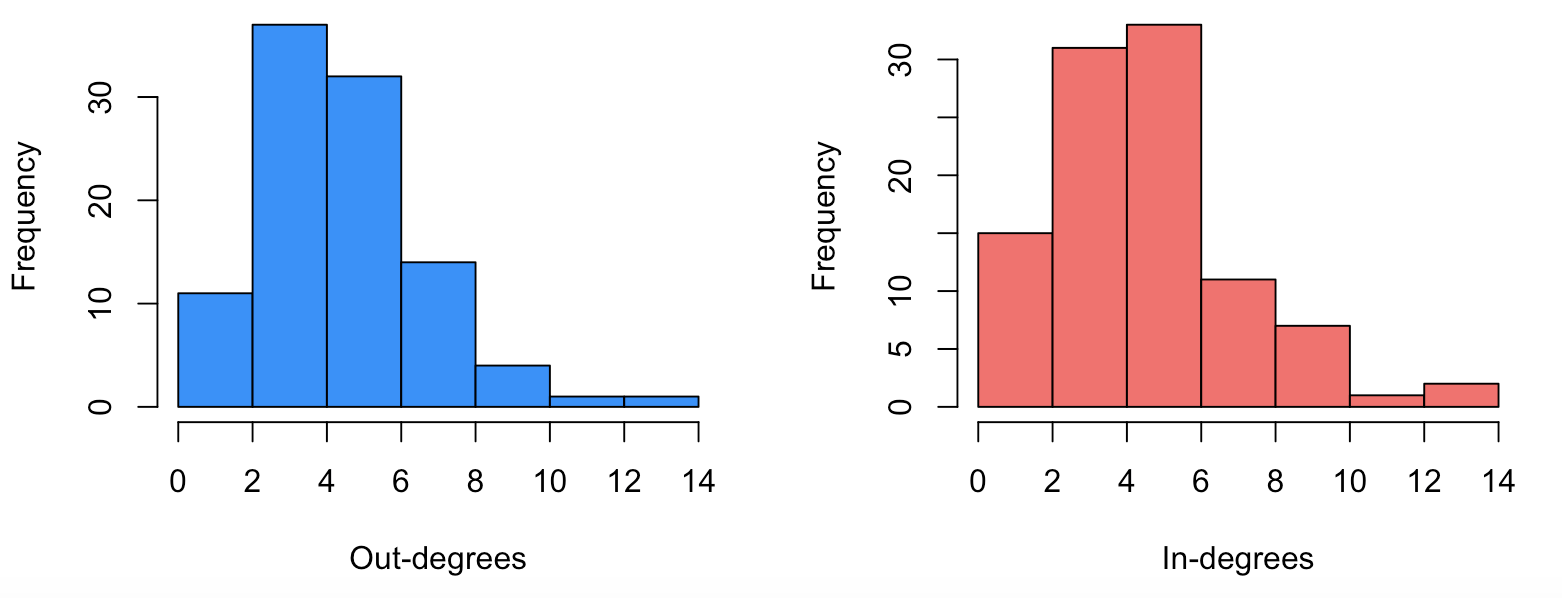}
	\end{minipage}
	\centering
	\begin{minipage}[t]{9cm}
		\includegraphics[angle=0,width=9cm]{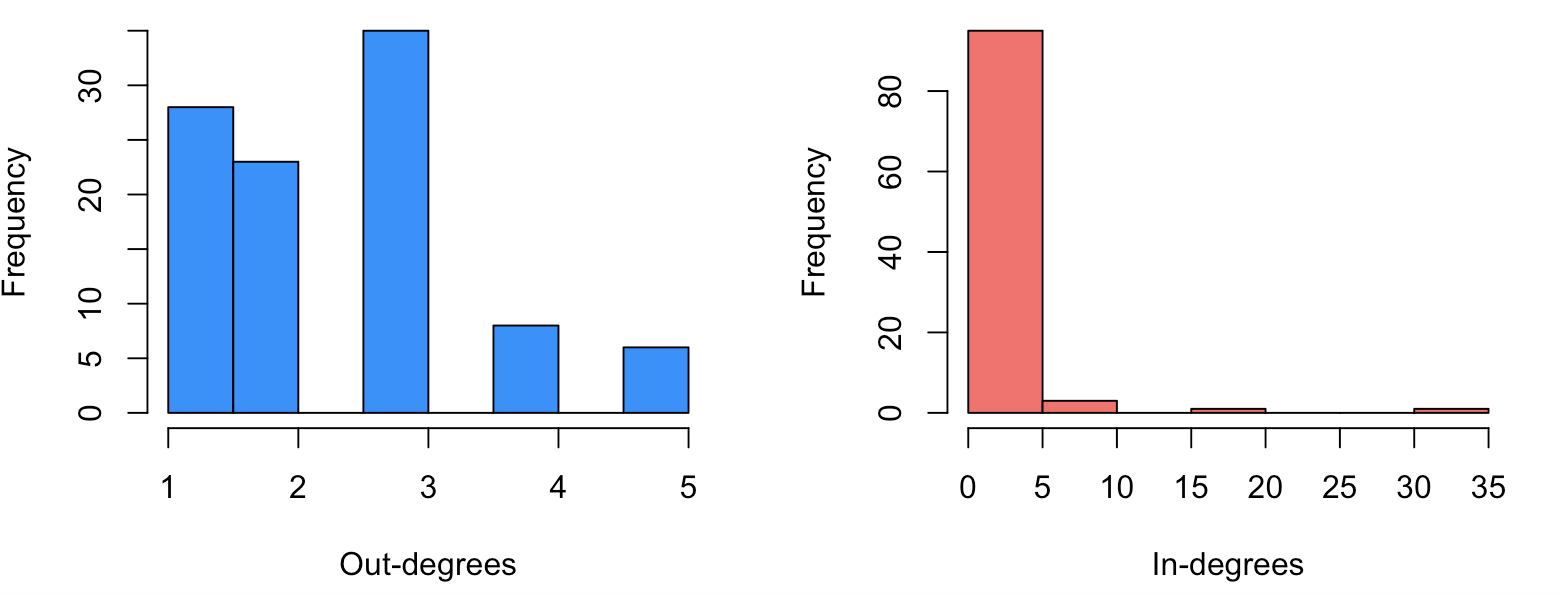}
	\end{minipage}
		\begin{minipage}[t]{9cm}
		\includegraphics[angle=0,width=9cm]{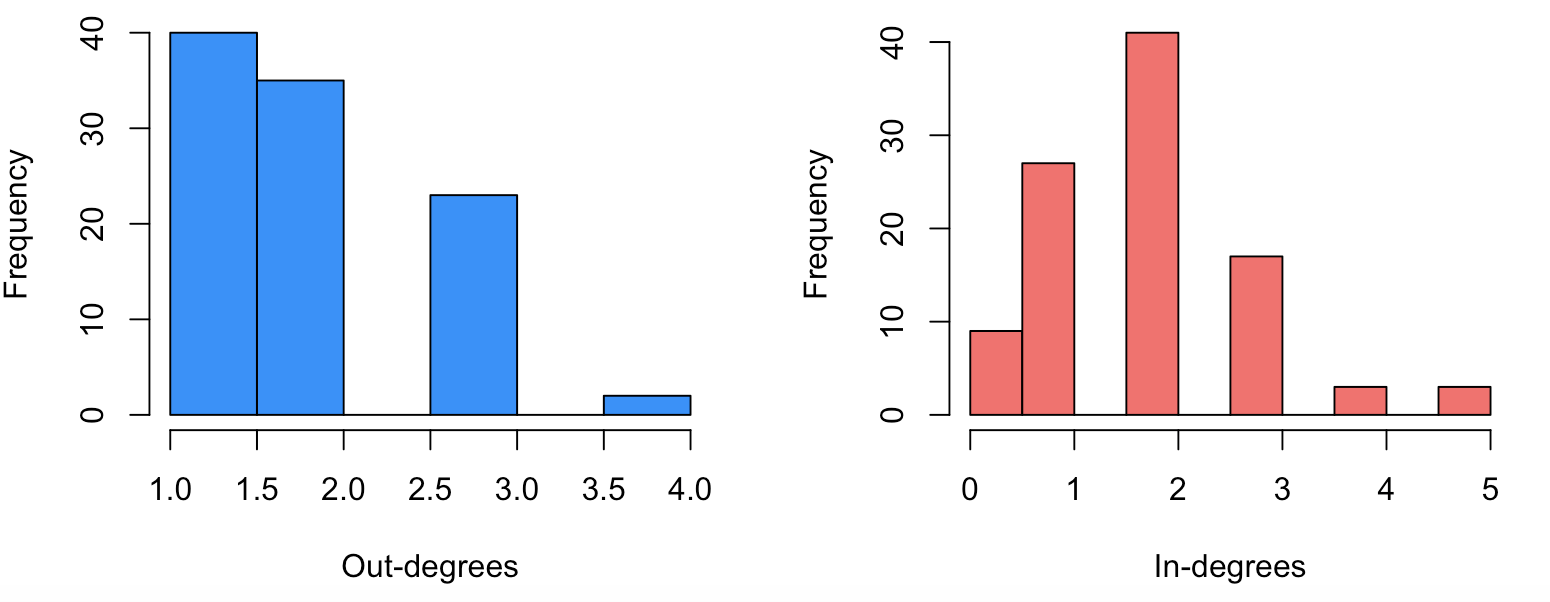}
	\end{minipage}
	\caption{Histograms of the out-degree  (the left plots) and  in-degree (the right plots), the top panel is for Dyad Network structure, the middle panel is for Power-Law Network structure, and the low panel is for Stochastic Block Network structure.}
	\label{fig_1}
\end{figure}
From Tables \ref{simulation_Dayone}- \ref{simulation_SBtwo}, all estimated parameters  are consistent, and RMSEs decease toward zero as $T \rightarrow \infty$.
This may be due to the reason that, when $T$ is larger, more sampled time intervals are available to fit the model. However, for our proposed model, we also need to compute the realized volatility estimators over each time interval for further model estimation. Less transactions are obtained over shorter time intervals, and this may render more unreliable RV estimators, which are used for the following Quasi-likelihood optimization.  In sum, these findings confirm that the proposed estimator $\hat{\theta}$ is consistent.
\begin{table}[H]
	\centering
	\scalebox{0.7}{
	\begin{tabular}{c|ccc|ccc|cccc}
		\hline
		\hline
		\multirow{2}{*}{  } & \multicolumn{3}{c} {$    T=50, m=39$ } & \multicolumn{3}{c} {$    T=50, m=78$ }& \multicolumn{3}{c} {$    T=50, m=390$ }\\
		\cline{2-10}
		& $N=25$ & $N=50$ & $ N=100$ & $ N=25$ & $N=50$ & $N=100$ & $N=25$ & $ N=50$ & $N=100$ &  \\
		$\omega_{0}$  &6.030e-04  & 5.976e-04&5.970e-04 &6.002e-04&5.981e-04&5.968e-04&5.991e-04&5.991e-04&5.981e-04\\
		$\alpha_{0}$  &1.833e-02  &1.151e-02 & 9.207e-03&1.599e-02&1.158e-02&7.899e-03&7.244e-04&4.604e-04&3.645e-04\\
		$\lambda_{0}$  &0.100 &6.817e-02&5.578e-02&7.678e-02&4.893e-02&3.026e-02&5.868e-04&4.076e-04&3.986e-04\\
		$\beta_{0}$  & 3.052e-02  &1.568e-02&5.353e-03&1.569e-02&7.235e-03&8.984e-04&2.310e-06&1.791e-06&1.736e-06\\
		$\omega_{R}$  &4.965e-03 &4.973e-03&4.988e-03&4.977e-03&4.981e-03&4.983e-03&4.996e-03&4.995e-03&4.994e-03\\
		$\alpha_{R}$  &0.105  &0.107&0.912e-02&9.521e-02&9.034e-02&9.253e-02&7.835e-02&7.969e-02&6.276e-02\\
		$\lambda_{R}$  & 0.117  & 0.101& 9.400e-02&0.117&0.126&6.306e-02&5.309e-02&6.231e-02&9.773e-03\\
		$\beta_{R}$  & 0.122 &0.118&0.115&9.764e-02&8.885e-02&7.146e-02&3.898e-02&4.030e-02&2.074e-02\\
		\hline
		\multirow{2}{*}{  } & \multicolumn{3}{c} {$    T=100, m=39$ } & \multicolumn{3}{c} {$    T=100, m=78$ }& \multicolumn{3}{c} {$    T=100, m=390$ }\\
		\cline{2-10}
		& $N=25$ & $N=50$ & $ N=100$ & $ N=25$ & $N=50$ & $N=100$ & $N=25$ & $ N=50$ & $N=100$ &  \\
		$\omega_{0}$  & 5.979e-04  & 5.962e-04& 5.958e-04&5.970e-04&5.948e-04&5.948e-04&5.949e-04&5.935e-04&5.925e-04\\
		$\alpha_{0}$  &8.873e-03   &6.159e-03 &3.615e-03 &4.978e-03&3.279e-03&1.933e-03&4.694e-05&3.443e-05&2.987e-05\\
		$\lambda_{0}$  & 3.488e-02  &1.577e-02&7.279e-03&5.260e-03&3.310e-03&1.886e-03&4.372e-05&3.243e-05&2.938e-05\\
		$\beta_{0}$  &  5.905e-03 &7.570e-04&1.072e-04&3.232e-05&1.380e-05&8.446e-06&2.318e-07&1.574e-07&1.456e-07\\
		$\omega_{R}$  & 4.975e-03  &4.974e-03&4.974e-03&4.974e-03&4.964e-03&4.863e-03&4.989e-03&4.898e-03&4.948e-03\\
		$\alpha_{R}$  & 0.110  &0.106&0.102&9.509e-02&9.492e-02&9.279e-02&7.352e-02&7.341e-02&5.991e-02\\
		$\lambda_{R}$  & 0.118  & 0.109& 9.912e-02&7.794e-02&6.641e-02&1.617e-02&5.669e-02&5.645e-02&5.219e-03\\
		$\beta_{R}$  & 9.056e-02 &7.055e-02&5.110e-02&0.110&8.467e-02&5.130e-02&2.185e-02&2.073e-02&1.859e-02\\
		\hline
		ND(\%) &  20.8 &10.2&5.0&  20.8 &10.2&5.0&  20.8 &10.2&5.0\\
		\hline
		\hline
	\end{tabular}
}
	\caption{Simulation results for \textbf{Example 1} with 1000 replications for one-step NHEAVY models. The RMSE and ND(\%) are reported for each estimator. }
	\label{simulation_Dayone}
\end{table}

\begin{table}[H]
	\centering
		\scalebox{0.7}{
	\begin{tabular}{c|ccc|ccc|cccc}
		\hline
		\hline
		\multirow{2}{*}{  } & \multicolumn{3}{c} {$    T=50, m=39$ } & \multicolumn{3}{c} {$    T=50, m=78$ }& \multicolumn{3}{c} {$    T=50, m=390$ }\\
		\cline{2-10}
		& $N=25$ & $N=50$ & $ N=100$ & $ N=25$ & $N=50$ & $N=100$ & $N=25$ & $ N=50$ & $N=100$ &  \\
		$\alpha_{0}$  &2.378e-02  &1.808e-02 & 1.497e-02&2.285e-02&1.930e-02&1.536e-02&3.472e-02&2.416e-02&1.423e-02\\
		$\lambda_{0}$  &0.113 &7.240e-02&7.144e-02&0.112&0.111&6.943e-02&8.762e-02&7.450e-02&6.656e-02\\
		$\beta_{0}$  & 3.568e-02  &2.101e-02&1.427e-02&3.380e-02&1.999e-02&1.444e-02&3.149e-02&2.096e-02&1.448e-02\\
		$\alpha_{R}$  &  7.588e-02&7.251e-02&7.074e-02&7.692e-02&7.263e-02&7.135e-02&3.956e-02&3.540e-02&3.586e-02\\
		$\lambda_{R}$  & 9.355e-02  & 8.687e-02& 9.351e-02&7.806e-02&5.619e-02&5.584e-02&8.058e-02&5.617e-02&3.544e-02\\
		$\beta_{R}$  & 8.186e-02 &7.987e-02&7.910e-02&8.213e-02&8.037e-02&7.978e-02&2.796e-02&1.972e-02&1.322e-02\\
		\hline
		\multirow{2}{*}{  } & \multicolumn{3}{c} {$    T=100, m=39$ } & \multicolumn{3}{c} {$    T=100, m=78$ }& \multicolumn{3}{c} {$    T=100, m=390$ }\\
		\cline{2-10}
		& $N=25$ & $N=50$ & $ N=100$ & $ N=25$ & $N=50$ & $N=100$ & $N=25$ & $ N=50$ & $N=100$ &  \\
		$\alpha_{0}$  & 1.387e-02  &1.374e-02  &1.222e-02 &1.369e-02&1.365e-02&1.207e-02&1.376e-02&1.348e-02&1.056e-02\\
		$\lambda_{0}$  &7.046e-02   &4.709e-02&3.150e-02&7.025e-02&4.654e-02&3.149e-02&5.478e-02&4.719e-02&3.143e-02\\
		$\beta_{0}$  & 2.612e-02  &1.627e-02&1.488e-02&2.550e-02&1.586e-02&1.141e-02&2.166e-02&1.541e-02&1.075e-02\\
		$\alpha_{R}$  & 3.760e-02  &3.502e-02&3.425e-02&3.744e-02&3.496e-02&3.412e-02&3.596e-02&3.582e-02&3.394e-02\\
		$\lambda_{R}$  & 4.557e-02  &3.284e-02& 2.473e-02&4.550e-02&3.263e-02&2.362e-02&4.088e-02&3.283e-02&2.094e-02\\
		$\beta_{R}$  & 2.461e-02  &1.598e-02&1.186e-02&2.459e-02&1.668e-02&1.173e-02&2.153e-02&1.652e-02&1.159e-02\\
		\hline
		ND(\%) &  20.8 &10.2&5.0&  20.8 &10.2&5.0&  20.8 &10.2&5.0\\
		\hline
		\hline
	\end{tabular} 
}
	\caption{Simulation results for \textbf{Example 1} with 1000 replications for two-step NHEAVY models. The RMSE and ND(\%) are reported for each estimator. }
	\label{simulation_Daytwo}
\end{table}

\begin{table}[H]
	\centering
		\scalebox{0.7}{
	\begin{tabular}{c|ccc|ccc|cccc}
		\hline
		\hline
		\multirow{2}{*}{  } & \multicolumn{3}{c} {$    T=50, m=39$ } & \multicolumn{3}{c} {$    T=50, m=78$ }& \multicolumn{3}{c} {$    T=50, m=390$ }\\
		\cline{2-10}
		& $N=25$ & $N=50$ & $ N=100$ & $ N=25$ & $N=50$ & $N=100$ & $N=25$ & $ N=50$ & $N=100$ &  \\
		$\omega_{0}$  & 5.897e-04 & 5.900e-04& 5.852e-04&5.959e-04&5.963e-04&5.987e-04&5.997e-02&5.996e-04&5.996e-04\\
		$\alpha_{0}$  &  2.194e-02&1.576e-02& 1.171e-02&2.630e-02&1.927e-02&1.310e-02&1.344e-02&8.923e-03&6.723e-03\\
		$\lambda_{0}$  &3.160e-02 &2.589e-02&2.560e-02&3.619e-02&2.396e-02&2.386e-02&3.242e-02&2.520e-02&1.937e-02\\
		$\beta_{0}$  & 8.715e-02  &8.476e-02&8.199e-02&3.199e-02&2.371e-02&1.473e-02&1.933e-02&1.372e-02&4.535e-03\\
		$\omega_{R}$  & 4.998e-02&4.998e-02&4.997e-02&4.998e-02&4.998e-02&4.997e-02&4.999e-02&4.996e-02&4.996e-02\\
		$\alpha_{R}$  &0.107  &0.102&9.701e-02&9.806e-02&9.069e-02&9.019e-02&8.095e-02&6.418e-02&9.778e-03\\
		$\lambda_{R}$  & 0.101  & 9.132e-02&6.483e-02 &3.197e-02&2.509e-02&2.000e-02&6.313e-02&4.411e-02&1.552e-02\\
		$\beta_{R}$  &6.762e-02 &4.332e-02&4.048e-02&9.729e-02&7.612e-02&4.948e-02&0.105&9.927e-02&5.664e-02\\
		\hline
		\multirow{2}{*}{  } & \multicolumn{3}{c} {$    T=100, m=39$ } & \multicolumn{3}{c} {$    T=100, m=78$ }& \multicolumn{3}{c} {$    T=100, m=390$ }\\
		\cline{2-10}
		& $N=25$ & $N=50$ & $ N=100$ & $ N=25$ & $N=50$ & $N=100$ & $N=25$ & $ N=50$ & $N=100$ &  \\
		$\omega_{0}$  & 5.996e-04  & 5.959e-04& 5.958e-04&5.986e-04&5.986e-04&5.986e-04&5.996e-04&5.997e-04&5.997e-04\\
		$\alpha_{0}$  & 1.911e-02  &1.641e-02 &9.102e-03 &1.858e-02&1.538e-02&7.937e-02&5.865e-03&5.846e-03&5.863e-03\\
		$\lambda_{0}$  & 2.736e-02      &2.280e-02&2.567e-02&2.010e-02&1.513e-02&1.094e-02&1.629e-03&1.537e-03&1.568e-03\\
		$\beta_{0}$  & 8.034e-02  &7.435e-02&7.322e-02&4.103e-02&3.151e-02&2.843e-02&5.931e-03&6.133e-03&5.913e-03\\
		$\omega_{R}$  &4.996e-02   &4.996e-02&4.996e-02&4.998e-02&5.000e-02&4.997e-02&5.000e-02&4.999e-02&4.997e-02\\
		$\alpha_{R}$  & 9.615e-02  &9.023e-02&8.833e-02&9.494e-02&9.494e-02&9.394e-02&2.172e-02&1.926e-02&1.610e-02\\
		$\lambda_{R}$  &  2.116e-02 & 1.435e-02&1.025e-02&2.114e-02&1.700e-02&1.116e-02&1.296e-02&5.566e-03&5.857e-03\\
		$\beta_{R}$  & 5.977e-02  &3.066e-02&1.920e-02&3.353e-02&2.826e-02&1.741e-02&3.977e-02&2.635e-02&1.167e-02\\
		\hline
		ND(\%) &  7.2 &4.5&2.4&  7.2 &4.5&2.4&  7.2&4.5&2.4\\
		\hline
		\hline
	\end{tabular} 
}
	\caption{Simulation results for \textbf{Example 2} with 1000 replications for one-step NHEAVY models. The RMSE and ND(\%) are reported for each estimator. }
	\label{simulation_Powerone}
\end{table}

\begin{table}[H]
	\centering
	\scalebox{0.7}{
	\begin{tabular}{c|ccc|ccc|cccc}
		\hline
		\hline
		\multirow{2}{*}{  } & \multicolumn{3}{c} {$    T=50, m=39$ } & \multicolumn{3}{c} {$    T=50, m=78$ }& \multicolumn{3}{c} {$    T=50, m=390$ }\\
		\cline{2-10}
		& $N=25$ & $N=50$ & $ N=100$ & $ N=25$ & $N=50$ & $N=100$ & $N=25$ & $ N=50$ & $N=100$ &  \\
		$\alpha_{0}$  &2.117e-02  &1.509e-02 & 1.604e-02&2.368e-02&1.608e-02&1.103e-02&3.554e-02&2.400e-02&1.379e-02\\
		$\lambda_{0}$  & 3.022e-02&2.230e-02&2.823e-02&3.286e-02&2.332e-02&2.073e-02&4.519e-02&3.247e-02&1.556e-02\\
		$\beta_{0}$  & 3.042e-02  &2.014e-02&1.505e-02&3.336e-02&2.065e-02&1.398e-02&3.172e-02&2.041e-02&1.520e-02\\
		$\alpha_{R}$  & 8.558e-02 &8.072e-02&7.983e-02&8.553e-02&8.141e-02&8.085e-02&8.764e-02&8.393e-02&8.115e-02\\
		$\lambda_{R}$  &2.302e-02   & 2.160e-02&2.173e-02 &2.353e-02&2.233e-02&2.215e-02&2.318e-02&2.187e-02&1.876e-02\\
		$\beta_{R}$  & 2.756e-02 &1.872e-02&1.326e-02&3.610e-02&1.934e-02&1.295e-02&2.984e-02&2.052e-02&1.937e-02\\
		\hline
		\multirow{2}{*}{  } & \multicolumn{3}{c} {$    T=100, m=39$ } & \multicolumn{3}{c} {$    T=100, m=78$ }& \multicolumn{3}{c} {$    T=100, m=390$ }\\
		\cline{2-10}
		& $N=25$ & $N=50$ & $ N=100$ & $ N=25$ & $N=50$ & $N=100$ & $N=25$ & $ N=50$ & $N=100$ &  \\
		$\alpha_{0}$  & 1.421e-02  &9.669e-03  &6.919e-03 &1.407e-02&1.025e-02&6.839e-03&1.212e-02&1.020e-02&6.042e-03\\
		$\lambda_{0}$  &1.875e-02   &1.573e-02&1.364e-02&1.778e-02&1.338e-02&1.364e-02&1.753e-02&1.215e-02&1.367e-02\\
		$\beta_{0}$  & 2.455e-02  &1.510e-02&1.108e-02&2.397e-02&1.508e-02&1.064e-02&2.287e-02&1.502e-02&1.073e-02\\
		$\alpha_{R}$  &  4.598e-02 &4.371e-02&4.323e-02&4.572e-02&3.535e-02&4.357e-02&4.634e-02&8.340e-03&4.350e-02\\
		$\lambda_{R}$  & 1.988e-02  &1.778e-02& 1.643e-02&1.917e-02&9.505e-03&9.643e-03&1.291e-02&9.041e-03&1.643e-03\\
		$\beta_{R}$  &  2.131e-02 &1.547e-02&1.096e-02&3.229e-02&1.261e-02&1.215e-02&3.141e-02&1.119e-02&1.128e-02\\
		\hline
		ND(\%) &  7.2 &4.5&2.4&  7.2 &4.5&2.4&  7.2 &4.5&2.4\\
		\hline
		\hline
	\end{tabular}
}
	\caption{Simulation results for \textbf{Example 2} with 1000 replications for two-step NHEAVY models. The RMSE and ND(\%) are reported for each estimator. }
	\label{simulation_Powertwo}
\end{table}

\begin{table}[H]
	\centering
	\scalebox{0.7}{
	\begin{tabular}{c|ccc|ccc|cccc}
		\hline
		\hline
		\multirow{2}{*}{  } & \multicolumn{3}{c} {$    T=50, m=39$ } & \multicolumn{3}{c} {$    T=50, m=78$ }& \multicolumn{3}{c} {$    T=50, m=390$ }\\
		\cline{2-10}
		& $N=25$ & $N=50$ & $ N=100$ & $ N=25$ & $N=50$ & $N=100$ & $N=25$ & $ N=50$ & $N=100$ &  \\
		$\omega_{0}$  &9.989e-03  & 9.989e-03&9.993e-03 &9.995e-03&9.994e-03&9.993e-03&9.998e-03&9.998e-03&9.996e-03\\
		$\alpha_{0}$  &2.450e-02  &1.803e-02&1.405e-02 &2.277e-02&2.295e-02&1.242e-02&1.266e-02&8.418e-03&6.223e-03\\
		$\lambda_{0}$  & 2.941e-02&2.545e-02&2.372e-02&3.223e-02&2.466e-02&2.222e-02&1.293e-02&1.127e-02&7.522e-03\\
		$\beta_{0}$  &8.585e-02   &8.388e-02&7.263e-02&3.967e-02&3.745e-02&2.146e-02&1.672e-03&1.661e-03&1.069e-03\\
		$\omega_{R}$  &5.977e-03 &5.980e-03&5.988e-03&5.976e-03&5.975e-03&5.974e-03&5.995e-03&5.995e-03&5.990e-03\\
		$\alpha_{R}$  &0.104  &0.103&9.822e-02&0.101&9.076e-02&9.069e-02&7.719e-02&6.826e-02&2.761e-02\\
		$\lambda_{R}$  &5.811e-02&5.607e-02&4.441e-02 &3.255e-02&2.270e-02&2.093e-02&2.514e-02  & 1.965e-02&2.197e-02\\
		$\beta_{R}$  &9.792e-02&8.974e-02&8.438e-02&9.528e-02&7.342e-02&4.534e-02&6.761e-02  &3.987e-02&2.163e-02\\
		\hline
		\multirow{2}{*}{  } & \multicolumn{3}{c} {$    T=100, m=39$ } & \multicolumn{3}{c} {$    T=100, m=78$ }& \multicolumn{3}{c} {$    T=100, m=390$ }\\
		\cline{2-10}
		& $N=25$ & $N=50$ & $ N=100$ & $ N=25$ & $N=50$ & $N=100$ & $N=25$ & $ N=50$ & $N=100$ &  \\
		$\omega_{0}$  & 1.001e-02  & 9.994e-03& 9.993e-03&9.996e-03&9.993e-03&9.991e-03&9.993e-03&9.991e-03&9.991e-03\\
		$\alpha_{0}$  & 2.288e-02  &1.728e-02 &1.386e-02 &1.406e-02&7.931e-03&3.659e-03&1.125e-03&6.042e-04&2.079e-06\\
		$\lambda_{0}$  & 2.587e-02  &1.995e-02&1.472e-02&1.248e-02&5.621e-03&1.805e-03&1.036e-03&5.440e-04&1.865e-06\\
		$\beta_{0}$  & 6.148e-02  &5.118e-02&4.130e-02&1.901e-02&7.332e-03&1.655e-03&4.904e-05&2.851e-05&1.548e-05\\
		$\omega_{R}$  & 5.964e-03  &5.963e-03&5.962e-03 &5.980e-03&5.960e-03&5.960e-03&5.958e-03&5.568e-03&5.559e-03\\
		$\alpha_{R}$  & 9.895e-02  &9.257e-02&8.930e-02&9.793e-02&9.246e-02&8.447e-03&3.745e-03&3.129e-03&4.382e-03\\
		$\lambda_{R}$  & 2.162e-02  & 1.404e-02&1.116e-02&2.148e-02&1.115e-02&1.113e-02&6.837e-04&3.410e-04&1.131e-04\\
		$\beta_{R}$  & 6.886e-02  &3.299e-02&2.073e-02&3.333e-02&3.212e-02&1.190e-02&4.404e-03&4.029e-03&6.059e-03\\
		\hline
		ND(\%) &  7.2 &4.3&1.9&  7.2 &4.3&1.9&  7.2&4.3&1.9\\
		\hline
		\hline
	\end{tabular} 
}
	\caption{Simulation results for \textbf{Example 3} with 1000 replications for one-step NHEAVY models. The RMSE and ND(\%) are reported for each estimator. }
	\label{simulation_SBone}
\end{table}

\begin{table}[H]
	\centering
		\scalebox{0.7}{
	\begin{tabular}{c|ccc|ccc|cccc}
		\hline
		\hline
		\multirow{2}{*}{  } & \multicolumn{3}{c} {$    T=50, m=39$ } & \multicolumn{3}{c} {$    T=50, m=78$ }& \multicolumn{3}{c} {$    T=50, m=390$ }\\
		\cline{2-10}
		& $N=25$ & $N=50$ & $ N=100$ & $ N=25$ & $N=50$ & $N=100$ & $N=25$ & $ N=50$ & $N=100$ &  \\
		$\alpha_{0}$  &2.447e-02  &1.603e-02 &1.203e-02 &2.291e-02&1.722e-02&1.271e-02&3.335e-02&2.363e-02&1.451e-02\\
		$\lambda_{0}$  & 2.852e-02&2.220e-02&2.720e-02&2.814e-02&2.206e-02&2.077e-02&3.520e-02&2.896e-02&2.551e-02\\
		$\beta_{0}$  & 9.526e-02  &9.353e-02&9.489e-02&9.504e-02&9.414e-02&9.385e-02&9.573e-02&9.546e-02&9.446e-02\\
		$\alpha_{R}$  & 7.602e-02 &7.261e-02&7.010e-02&7.724e-02&7.343e-02&7.174e-02&7.885e-02&7.568e-02&7.263e-02\\
		$\lambda_{R}$  & 2.494e-02  & 2.078e-02& 2.051e-02&2.607e-02&2.095e-02&2.020e-02&2.648e-02&2.121e-02&2.053e-02\\
		$\beta_{R}$  & 9.440e-02  &9.270e-02&9.238e-02&9.361e-02&9.250e-02&9.223e-02&9.438e-02&9.144e-02&9.220e-02\\
		\hline
		\multirow{2}{*}{  } & \multicolumn{3}{c} {$    T=100, m=39$ } & \multicolumn{3}{c} {$    T=100, m=78$ }& \multicolumn{3}{c} {$    T=100, m=390$ }\\
		\cline{2-10}
		& $N=25$ & $N=50$ & $ N=100$ & $ N=25$ & $N=50$ & $N=100$ & $N=25$ & $ N=50$ & $N=100$ &  \\
		$\alpha_{0}$  & 1.510e-02  &1.071e-02  &8.488e-03 &1.507e-02&1.038e-02&8.458e-03&1.222e-02&1.009e-02&8.014e-03\\
		$\lambda_{0}$  & 1.890e-02  &1.510e-02&1.285e-02&1.798e-02&1.508e-02&1.222e-02&1.698e-02&1.410e-02&1.021e-02\\
		$\beta_{0}$  & 2.519e-02  &1.526e-02&1.443e-02&2.358e-02&1.483e-02&1.142e-02&2.328e-02&1.504e-02&1.023e-02\\
		$\alpha_{R}$  & 3.759e-02  &3.527e-02&3.402e-02&3.742e-02&3.524e-02&3.304e-02&3.712e-02&3.522e-02&3.221e-02\\
		$\lambda_{R}$  & 1.906e-02  &1.732e-02& 1.645e-02&1.871e-02&1.697e-02&1.643e-02&1.874e-02&1.614e-02&1.605e-03\\
		$\beta_{R}$  & 8.309e-02 &7.985e-02&7.855e-02&8.348e-02&8.093e-02&7.978e-02&8.459e-02&8.226e-02&8.025e-02\\
		\hline
		ND(\%) &  7.2 &4.3&1.9&  7.2 &4.3&1.9&  7.2 &4.35&1.9\\
		\hline
		\hline
	\end{tabular} 
}
	\caption{Simulation results for \textbf{Example 3} with 1000 replications for two-step NHEAVY models. The RMSE and ND(\%) are reported for each estimator. }
	\label{simulation_SBtwo}
\end{table}

\section{An empirical example}

This section chooses 18 constituent stocks from the CSI 300 to analyze the stock markets by applying our proposed model. The CSI 300 is a capitalization-weighted stock market index, designed to replicate the performance of the top 300 stocks traded on the Shanghai Stock Exchange and the Shenzhen Stock Exchange. It is considered to be a blue-chip index for Mainland China stock exchanges. The 18 constituent stocks take up the larger weight in the CSI 300 index.
The high-frequency historical data is the 1-minute data from January 2, 2018 to December 31, 2019. In general, there are 240 prices within a trading day. All high-frequency closing prices are transformed into log prices $\log (P_{t_{l,m}})$, $t_{{l,m}}=l-1+m/M, l=1,\ldots, L, m=1, \cdots, M$ with $L=487$ and $M=240$, where $L$ is the number of trading days, and $M$ is the number of intraday  data. The in-sample period starts from January 2, 2018 to July 8, 2019, which contains 88080 high-frequency prices and 367 days. The out-of-sample period starts  from July 9, 2019 to December 31, 2019, which contains 28800 high-frequency prices and 120 days. The network structure is described by the adjacency matrix $A=(a_{ij}), i =1,\cdots, 18$,  and $j =1, \cdots, 18$. The network structure is visulized in Figure \ref{network structure}. All the chosen stocks and the industry classifition are found in Table \ref{JointQuant1}.  

\begin{figure}[H]
	\centering
	\includegraphics[angle=270,width=6.5cm]{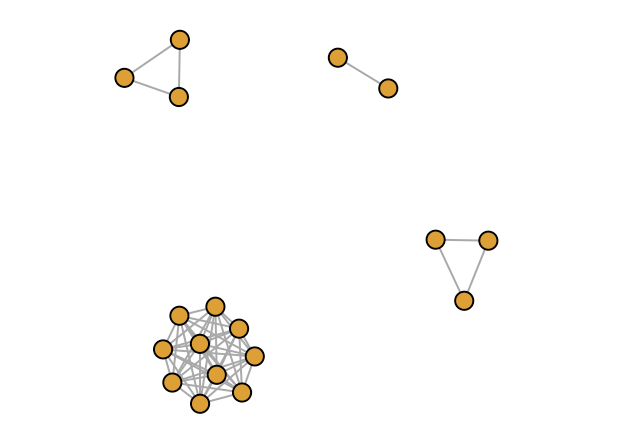}
	\caption{Network structure for selected 18 constituent stock .
	}
	\label{network structure}
\end{figure}

\begin{table}[H]
	\centering
\scalebox{0.7}{
	\begin{tabular}{c|ccc}
		\hline
		\hline
		{ Industry Name} & \multicolumn{1}{c}  {Stock Name } &\\
		\cline{1-3}
		\multirow{1}{*}{Industry}& China State Construction Engineering Co Ltd & CRRC Corporation Limited \\
		\cline{1-3}
		\multirow{2}{*}{Consumer Discretionary}	&  Gree Electric Appliances,Inc. of Zhuhai&Midea Group Co Ltd\\
		\cline{2-3}
		& SAIC Motor Co Ltd \\
		\cline{1-3}
		\multirow{2}{*}{Daily Consumption}& Kweichow Moutai Co Ltd  &Inner Mongolia Yili Industrial Group Co Ltd\\
		\cline{2-3}
		& Wuliangye Yibin Co Ltd \\
		\cline{1-3}
		\multirow{5}{*}{Finance }& Ping An Insurance (Group) Company of China Ltd   & China Merchants Bank Co Ltd\\
		\cline{2-3}
		&Industrial Bank&Bank of Communications Co Ltd\\
		\cline{2-3}
		&China Minsheng Banking Corp Ltd&Agricultural Bank of China Co Ltd\\
		\cline{2-3}
		&CITIC Securities Co Ltd&Shanghai Pudong Development Bank Co Ltd\\
		\cline{2-3}
		&Industrial and Commercial Bank of China Ltd&China Pacific Insurance (Group) Co Ltd\\
		\hline
		\hline
	\end{tabular}
}
	\caption{Stock Name (Divided by the First Classification Criteria of JointQuant) }
	\label{JointQuant1}
\end{table}

We first estimate the model using a rolling window of 88080 observations and then obtain forecasting of $h_{it}, i=1, 2, \cdots, 18$ at future 1-day volatility. Then, we estimate the model directly using the in-sample 88080 high-frequency prices, and forecast the future 1-day volatility.
To illustrate the prediction power of the network HEAVY model, 
we also compute further network GARCH model \citep{Zhou:2020}, referred to as NGARCH model. For out-of-sample model evaluation,  a QLIKE loss function is conducted the forecasting by
\begin{align}
	\mathtt{loss}(r_{i,t+s}^{2}, \tilde{\sigma}_{i,t+s|t}^2)=\frac{r_{i,t+s}^{2}}{\tilde{\sigma}_{i,t+s|t}^2}-\log(\frac{r_{i,t+s}^{2}}{\tilde{\sigma}_{i,t+s|t}^2})-1,
	\label{evaluation model}
\end{align}
where $r_{i,t+s}^{2}$ is the proxy used for the time $t+s$ (latent) variance and $\tilde{\sigma}_{t+s|t}^2$ is a predictor made at time $t$, and  $s=1,\cdots, S$, $i=1, \cdots, N$.  This loss function has been shown to be robust to certain types of noise in the proxy in \cite{Patton:2011a}, \cite{Patton:2011b} and \cite{PattonandSheppard:2009}. For the rolling window procedue, the QLIKE loss function is the average of all forecast values.

The comparison forecasting results of the one-step NHEAVY model,  two-step  NHEAVY model, one-step NGARCH, and two-step NGARCH model with rolling windows procedure and directly in-sample procedure are summarized in Tables \ref{QLIKE_1}  and \ref{QLIKE_2}, respectively.   From Table \ref{QLIKE_1}  and  \ref{QLIKE_2}, one-step NHEAVY model has stronger predicting power than two-step NHEAVY model. Similarly, the one-step NGARCH model has stronger predictor power than the two-step NGARCH model. Sometimes, one-step NGARCH outperforms the two-step NHEAVY model. This is because the two-step NHEAVY model is referred to estimate ${\mu_{R}}_{i}$, $\mu_{i}$, and $\kappa_{i}$, then these estimators are plugged into the quasi-likelihood function, which may change the asymptotic standard error. Therefore, we would compare the one-step NHEAVY and one-step NGARCH model in future two-day and five-day forecasting power.  The results are summarized in Table \ref{QLIKE_3}, 
NHEAVY model has stronger forecasting power than NGARCH model for future 2 days and 5 days.  As a result, it can be seen that NHEAVY model outperforms network GARCH models in short-term forecasting. 

\begin{table}[H]
	\centering
\scalebox{0.62}{
	\begin{tabular}{c|c|c|c|cc}
		\hline
		\hline
		Industry Name & NHEAVY$_{one }$&NHEAVY$_{two }$&NGARCH$_{one }$&NGARCH$_{two}$&\\
		\cline{1-5} 
		Ping An Insurance (Group) Company of China Ltd  & 0.424  &0.698& 0.465&0.835 &\\
		Kweichow Moutai Co Ltd & 0.431 &0.516&0.526&0.722&\\
		China Merchants Bank Co Ltd Moutai Co Ltd & 0.338 &0.377&0.367&0.532&\\
		Industrial Bank& 0.403&0.372&0.429&0.478&\\
		Gree Electric Appliances,Inc. of Zhuhai of Communications Co LTD& 0.436&0.586&0.448&0.662&\\
		Midea Group CO., LTD & 0.330&0.497&0.381&0.562&\\
		Bank of Communications Co LTD& 0.811&0.549&0.971&0.902&\\
		China Minsheng Banking Corp Ltd&0.782&1.116&0.881&1.161&\\
		Inner Mongolia Yili Industrial Group Co Ltd & 0.417&0.473&0.495&0.693&\\
		Agricultural Bank of China Co Ltd& 1.074&0.891&1.174&1.228&\\
		CITIC Securities Co Ltd & 0.444&0.663&0.498&0.758&\\
		China State Construction Engineering Co Ltd & 0.426&1.311&0.473&1.218&\\
		Shanghai Pudong Development Bank Co Ltd& 0.493&0.363&0.516&0.512&\\
		Industrial and Commercial Bank of China Ltd & 0.700&0.632&0.780&0.870&\\
		Wuliangye Yibin Co Ltd& 0.397&0.658&0.442&0.775&\\
		SAIC Motor Co Ltd & 0.458&0.474&0.458&0.608&\\
		China Pacific Insurance (Group) Co Ltd &0.228&0.389&0.282&0.489&\\
		CRRC Corporation Limited & 0.696&0.822&0.744&1.078&\\
		\hline
		\hline
	\end{tabular} 
}
	\caption{QLIKEs for one- step NHEAVY, two-step NHEAVY, one-step NGARCH, two-step NGARCH model using rollling window procedue for one-day forecast.}
	\label{QLIKE_1}
\end{table}

\begin{table}[H]
	\centering
	\scalebox{0.65}{
	\begin{tabular}{c|c|c|c|cc}
		\hline
		\hline
		Industry Name & NHEAVY$_{one }$&NHEAVY$_{two }$&NGARCH$_{one }$&NGARCH$_{two}$&\\
		\cline{1-5} 
		Ping An Insurance (Group) Company of China Ltd  & 0.279  &0.435& 0.522&0.588&\\
		Kweichow Moutai Co Ltd & 0.393&0.394&0.575&0.519&\\
		China Merchants Bank Co Ltd Moutai Co Ltd & 0.959&0.967&1.421&1.394&\\
		Industrial Bank& 1.320&1.198&1.733&1.536&\\
		Gree Electric Appliances,Inc. of Zhuhai of Communications Co LTD& 0.355&0.484&0.405&0.562&\\
		Midea Group CO., LTD & 0.155&0.260&0.220&0.269&\\
		Bank of Communications Co LTD& 0.487&0.140&0.849&0.428&\\
		China Minsheng Banking Corp Ltd&1.779&2.026&2.678&2.574&\\
		Inner Mongolia Yili Industrial Group Co Ltd & 0.620&0.658&0.658&0.683&\\
		Agricultural Bank of China Co Ltd& 1.935&1.758&2.488&2.162&\\
		CITIC Securities Co Ltd & 0.318&0.512&0.332&0.521&\\
		China State Construction Engineering Co Ltd & 0.481&1.231&0.560&1.174&\\
		Shanghai Pudong Development Bank Co Ltd& 1.567&1.302&1.947&1.643&\\
		Industrial and Commercial Bank of China Ltd & 1.450&1.388&2.043&1.869&\\
		Wuliangye Yibin Co Ltd& 0.051&0.079&0.055&0.114&\\
		SAIC Motor Co Ltd & 0.201&0.096&0.441&0.556&\\
		China Pacific Insurance (Group) Co Ltd &0.631&0.834&0.619&0.930&\\
		CRRC Corporation Limited & 0.914&0.973&1.070&1.191&\\
		\hline
		\hline
	\end{tabular} 
}
	\caption{QLIKEs for one- step NHEAVY, two-step NHEAVY, one-step NGARCH, two-step NGARCH model using directly in-sample procedue for one-day forecast.}
	\label{QLIKE_2}
\end{table}

\begin{table}[H]
	\centering
\scalebox{0.6}{
	\begin{tabular}{c|c|c|c|cc}
		\hline
		\hline
		Industry Name & NHEAVY$_{one }$ (2)&NGARCH$_{one }$ (2 )&NHEAVY$_{one }$ (5 )&NGARCH$_{one}$ (5)&\\
		\cline{1-5} 
		Ping An Insurance (Group) Company of China Ltd  & 0.933  &1.588& 0.396&0.980&\\
		Kweichow Moutai Co Ltd & 0.925 &1.439&0.348&0.727&\\
		China Merchants Bank Co Ltd Moutai Co Ltd & 0.543&1.273&2.018&2.828&\\
		Industrial Bank& 0.483&1.130&0.639&1.203&\\
		Gree Electric Appliances,Inc. of Zhuhai of Communications Co LTD& 0.465&0.772&0.276&0.712&\\
		Midea Group CO., LTD & 0.625&0.787&3.194&3.667&\\
		Bank of Communications Co LTD& 0.552&1.178&1.679&2.183&\\
		China Minsheng Banking Corp Ltd&3.260&4.506&1.605&2.519&\\
		Inner Mongolia Yili Industrial Group Co Ltd & 0.060&0.104&2.333&2.684&\\
		Agricultural Bank of China Co Ltd& 1.374&2.238&1.910&2.520&\\
		CITIC Securities Co Ltd & 0.558&0.837&0.162&0.192&\\
		China State Construction Engineering Co Ltd & 0.992&1.378&0.017&0.028&\\
		Shanghai Pudong Development Bank Co Ltd& 1.594&2.332&2.876&3.490&\\
		Industrial and Commercial Bank of China Ltd & 1.027&1.911&1.489&2.196&\\
		Wuliangye Yibin Co Ltd& 0.810&1.112&2.156&2.742&\\
		SAIC Motor Co Ltd & 0.005&0.011&0.169&0.261&\\
		China Pacific Insurance (Group) Co Ltd &0.743&1.044&0.248&0.618&\\
		CRRC Corporation Limited & 0.718&1.160&2.197&2.637&\\
		\hline
		\hline
	\end{tabular} 
}
	\caption{QLIKE for one- step NHEAVY, one-step NGARCH for two-day and five-day forecast values..}
	\label{QLIKE_3}
\end{table}

\section{Conclusion}

This paper  introduces one novel network NHEAVY model, which takes network structure  into consideration. The proposed model reduces the computational complexity substantially from $\mathcal{O}(N^2)$ to $\mathcal{O}(N)$. The parameters in network HEAVY models are estimated by the quasi-likelihood function.  All statistical properties of estimators are presented in Section 3, and the findings are confirmed by numerical studies. We further illustrate  the usefulness of our models using a real data set from the CSI 300 in Chinese stock market. It can show that the network HEAVY model outperforms network GARCH model in short-term forecasting. 

The proposed network HEAVY models can be  extended in further directions. First, it would be interesting to add asymmetric terms to network HEAVY models to explicitly capture the leverage effects and  this may improve forecast performance further. It might also be beneficial to use a long-run/short/run component model in the dynamic equations to separate our transitory movements in volatility.

\section*{Acknowledgements}

Guodong Li's research  is supported in part by HK RGC Grant GRF-17306519.
Junhui Wang's research is supported in part by HK RGC Grants GRF-11303918, GRF-11300919 and GRF-11304520.

\section*{Data Availability Statment}
The data that support the findings of this study are available in JoinQuant Data Services at https://www.joinquant.com/data.

\bibliography{myReferences}

\begin{thebibliography}{}

\bibitem[Barndorff-Nielsen et~al., 2008]{B:2008}
Barndorff-Nielsen, O.~E., Hansen, P.~R., Lunde, A., and Shephard, N. (2008).
\newblock Designing realized kernels to measure the ex post variation of equity
  prices in the presence of noise.
\newblock {\em Econometrica}, 76:1481--1536.

\bibitem[Bollerslev, 1986]{Bollers:1986}
Bollerslev, T. (1986).
\newblock Generalized autoregressive conditional heteroskedasticity.
\newblock {\em Journal of Econometrics}, 31:307--327.

\bibitem[Bollerslev and Wooldridge, 2007]{Bollers:2007}
Bollerslev, T. and Wooldridge, J. (2007).
\newblock Quasi-maximum likelihood estimation and inference in dynamic models
  with time-varying covariances.
\newblock {\em Econometric Reviews}, 11:143--172.

\bibitem[Bonanno et~al., 2001]{Bonanno:2001}
Bonanno, G., Lillo, F., and Mantegna, R. (2001).
\newblock High-frequency cross-correlation in a set of stocks.
\newblock {\em Quantitative Finance}, 1:96--104.

\bibitem[Cipollini et~al., 2007]{Cipollini:2007}
Cipollini, F., Engle, R., and Gallo, G. (2007).
\newblock A model for multivariate non-negative valued processes in financial
  econometrics.
\newblock {\em Working papers Archive wp2007-16}.

\bibitem[Corsi, 2009]{Corsi:2009}
Corsi, F. (2009).
\newblock A simple approximate long-memory model of realized volatility.
\newblock {\em Journal of Financial Econometrics}, 7:174--196.

\bibitem[Engle, 1982]{Engle:1982}
Engle, R.~F. (1982).
\newblock Autoregressive conditional heteroscedasticity with estimates of the
  variance of united kingdom inflation.
\newblock {\em Econometrica: Journal of the Econometric Society}, 50:987--1007.

\bibitem[Engle and Gallo, 2006]{Engle:2006}
Engle, R.~F. and Gallo, G.~M. (2006).
\newblock A multiple indicators model for volatility using intra-daily data.
\newblock {\em Journal of Econometrics}, 131:3--27.

\bibitem[Ghysels et~al., 2006]{Ghysels:2006}
Ghysels, E., Santa-Clara, P., and Valkanov, R. (2006).
\newblock Predicting volatility: getting the most out of return data sampled at
  different frequencies.
\newblock {\em Journal of Econometrics}, 131:59--95.

\bibitem[Goel and Goldstein, 2014]{Goel:2014}
Goel, S. and Goldstein, D.~G. (2014).
\newblock Predicting individual behavior with social networks.
\newblock {\em Marketing Science}, 33:82--93.

\bibitem[Hansen et~al., 2012]{Hansen:2012}
Hansen, P.~R., Huang, Z., and Shek, H.~H. (2012).
\newblock Realized {GARCH}: a joint model for returns and realized measures of
  volatility.
\newblock {\em Journal of Applied Econometrics}, 27:877--906.

\bibitem[Huang et~al., 2009]{Huang:2009}
Huang, W.~Q., Zhang, X.~T., and Yao, S. (2009).
\newblock A network analysis of the chinese stock market.
\newblock {\em Physica A: Statistical Mechanics and its Applications},
  388:2956--2964.

\bibitem[Jacod et~al., 2009]{J:2009}
Jacod, J., Li, Y., Mykland, P.~A., Podolskij, M., and Vetter, M. (2009).
\newblock Microstructure noise in the continuous case: the pre-averaging
  approach.
\newblock {\em Stochastic processes and their applications}, 119:2249--2276.

\bibitem[Jin and Maheu, 2013]{JinandMaheu:2013}
Jin, X. and Maheu, J. (2013).
\newblock Modeling realized covariances and returns.
\newblock {\em Journal of Financial Econometrics}, 11:335--369.

\bibitem[Kim and Wang, 2016]{KW:2016}
Kim, D. and Wang, Y. (2016).
\newblock Unified discrete-time and continuous-time models and statistical
  inferences for merged low-frequency and high-frequency financial data.
\newblock {\em Journal of Econometrics}, 194:220--230.

\bibitem[Mantegna, 1999]{Mantegna:1999}
Mantegna, R.~N. (1999).
\newblock Hierarchical structure in financial markets.
\newblock {\em The European physical Journal B-condensed matter and complex
  systems}, 11:193--197.

\bibitem[Newman, 2003]{Newman:2003}
Newman, M. E.~J. (2003).
\newblock The strucutre and funtion of complex networks.
\newblock {\em SIAM Review}, 45:167--256.

\bibitem[Nitzan and Libai, 2011]{Nitzan:2011}
Nitzan, I. and Libai, B. (2011).
\newblock Social effects on customer retention.
\newblock {\em Journal of Marketing}, 75:24--38.

\bibitem[Noureldin and Shephard, 2012]{Noureldin:2012}
Noureldin, D.and~Shephard, N. and Shephard, K. (2012).
\newblock Multivariate high-frequency-based volatility ({HEAVY}) models.
\newblock {\em Journal of Applied Econometrics}, 119:907--933.

\bibitem[Patton, 2011a]{Patton:2011b}
Patton, A.~J. (2011a).
\newblock Data-based ranking of realised volatility estimators.
\newblock {\em Journal of Econometrics}, 161:284--303.

\bibitem[Patton, 2011b]{Patton:2011a}
Patton, A.~J. (2011b).
\newblock Volatility forecast comparison using imperfect volatility proxies.
\newblock {\em Journal of Econometrics}, 160:246--256.

\bibitem[Patton and Sheppard, 2009]{PattonandSheppard:2009}
Patton, A.~J. and Sheppard, K. (2009).
\newblock {\em Evaluating volatility and correlation forecasts. In:Handbook of
  Financial Time Series, Anderson T.G., Davis, R.A., Kreiss, J.-P., Mikosch,
  T.(Editors)}.
\newblock Springer.

\bibitem[Reka and Barabasi, 2002]{RekaandBarabasi:2002}
Reka, A. and Barabasi, A.~L. (2002).
\newblock Statistical mechanics of complex network.
\newblock {\em Review of modern physics}, 74:47--97.

\bibitem[Shephard and Sheppard, 2010]{Shephard:2010}
Shephard, N. and Sheppard, K. (2010).
\newblock Realising the future: forecasting with high-frequency-based
  volatility ({HEAVY}) models.
\newblock {\em Journal of Applied Econometrics}, 25:197--231.

\bibitem[Sheppard and Wen, 2019]{sheppard:2019}
Sheppard, K. and Wen, X. (2019).
\newblock Factor high-frequency-based volatility({HEAVY}) models.
\newblock {\em Journal of Financial Econometrics}, 17:33--65.

\bibitem[Song et~al., 2021]{song2020realized}
Song, X., Kim, D., Yuan, H., Cui, X., Lu, Z., Zhou, Y., and Wang, Y. (2021).
\newblock Volatility analysis with realized {GARCH}-{I}t\^o models.
\newblock {\em Journal of Econometrics}, 222:393--410.

\bibitem[Tumminello et~al., 2010]{Tumminello:2010}
Tumminello, M., Lillo, F., and Mantegna, R.~N. (2010).
\newblock Correlation, hierarchies, and networks in financial markets.
\newblock {\em Journal of Economic Behavior \& Organization}, 75:40--58.

\bibitem[Wang, 2002]{Wang:2002}
Wang, Y. (2002).
\newblock Asymptotic nonequivalence of {GARCH} models and diffusions.
\newblock {\em The Annals of Statistics}, 30:754--783.

\bibitem[Wei et~al., 2016]{Wei:2016}
Wei, Y., Yildirim, P., Christophe, V. D.~B., and Chrysanthos, D. (2016).
\newblock Credict scoring with social network data.
\newblock {\em Marketing Science}, 35:201--340.

\bibitem[Xiu, 2010]{X:2010}
Xiu, D. (2010).
\newblock Quasi-maximum likelihood estimation of volatility with high frequency
  data.
\newblock {\em Journal of Econometrics}, 159:235--250.

\bibitem[Yuan et~al., 2022]{Yuancui:2022}
Yuan, H., Sun, Y., Xu, L., Zhou, Y., and Cui, X. (2022).
\newblock A new volatility: {GQARCH}-{I}t\^{o} model.
\newblock {\em Journal of Time Series Analysis}, 43:345--370.

\bibitem[Zhang, 2006]{Zh:2006}
Zhang, L. (2006).
\newblock Efficient estimation of stochastic volatility using noisy
  observations: a multi-scale approach.
\newblock {\em Bernoulli}, 12:1019--1043.

\bibitem[Zhang and Mykland, 2005]{Zh:2005}
Zhang, L. and Mykland, P. A .and A{\"\i}t-Sahalia, Y. (2005).
\newblock A tale of two time scales: Determining integrated volatility with
  noisy high-frequency data.
\newblock {\em Journal of the American Statistical Association},
  100:1394--1411.

\bibitem[Zhou et~al., 2020]{Zhou:2020}
Zhou, J., Li, D., Pan, R., and Wang, H. (2020).
\newblock Network {GARCH} model.
\newblock {\em Statistica Sinica}, 30:1723--1740.

\bibitem[Zhu and Pan, 2020]{Zhu:2020}
Zhu, X. and Pan, R. (2020).
\newblock Grouped network vector autoregression.
\newblock {\em Statistica Sinca}, 30:1437--1462.

\bibitem[Zhu et~al., 2017]{Zhu:2017}
Zhu, X., Pan, R., Li, G., Liu, Y., and Wang, H. (2017).
\newblock Network vector autoregression.
\newblock {\em The Annals of Statistics}, 45:1096--1123.

\end{thebibliography}

\section* {Appendix}
\textbf{Proof of Theorem 1}
\begin{proof}
We first prove the parameters $\phi=(\omega, \alpha, \lambda, \beta)^{\top}$ and $\phi_R=(\omega_R, \alpha_R, \lambda_R, \beta_R)^{\top}$ satisfy the consistency and asympotics, repectively. Then the statistical properties of the parameters $\theta=( \phi^{\top}, \phi_R^{\top})^{\top}$ are intutive. 
	
We give some notations. Let $K$ and $\rho$ be generic constants taking many different values with $K>0$ and $\rho \in (0,1)$.

By the expressions of $\widehat {h}_{it}(\phi)$, ${h}_{it}(\phi)$, $\widehat{\mu}_{it}(\phi_R)$ and 
${\mu}_{it}(\phi_R)$ in Section 2, we have the following facts
\begin{align*}
\widehat {h}_{it}(\phi)=&\sum_{k=1}^{t-1} \beta^{k-1}\left\{\omega+\alpha {\rm RM}_{i, t-k}+\lambda d_{i}^{-1} \sum_{j \neq i} a_{i j} {\rm RM}_{j, t-k}\right\}+ \beta^{t-1}\widehat{h}_{i1}(\phi), t \geq 2,\\
{h}_{it}(\phi)=&\sum_{k=1}^{t-1} \beta^{k-1}\left\{\omega+\alpha {\rm RM}_{i, t-k}+\lambda d_{i}^{-1} \sum_{j \neq i} a_{i j} {\rm RM}_{j, t-k}\right\}+\beta^{t-1}{h}_{i1}(\phi), t \geq 2,
\end{align*}

\begin{align*}
	\widehat {\mu}_{it}(\phi_R)=&\sum_{k=1}^{t-1} \beta_R^{k-1}\left\{\omega_R+\alpha_R {\rm RM}_{i, t-k}+\lambda_R d_{i}^{-1} \sum_{j \neq i} a_{i j} {\rm RM}_{j, t-k}\right\}+ \beta_R^{t-1}\widehat{\mu}_{i1}(\phi_R), t \geq 2,\\
	{\mu}_{it}(\phi_R)=&\sum_{k=1}^{t-1} \beta_R^{k-1}\left\{\omega_R+\alpha_R {\rm RM}_{i, t-k}+\lambda_R d_{i}^{-1} \sum_{j \neq i} a_{i j} {\rm RM}_{j, t-k}\right\}+\beta_R^{t-1}{\mu}_{i1}(\phi_R), t \geq 2,
\end{align*}
Then, we have
\begin{align}
{h}_{it}(\phi)=&\widehat {h}_{it}(\phi)-\beta^{t-1}\widehat{h}_{i1}(\phi)+\beta^{t-1}{h}_{i1}(\phi), \tag{A1}\\
{\mu}_{it}(\phi_R)=&\widehat {\mu}_{it}(\phi_R)-\beta_R^{t-1}\widehat{\mu}_{i1}(\phi_R)+\beta_R^{t-1}{\mu}_{i1}(\phi_R), \tag{A2}
\end{align}
due to $\widehat {h}_{i1}(\phi)=T^{-1/2}\sum_{t=1}^{(\sqrt{T})}r_{it}^2$,
and $\widehat {\mu}_{i1}(\phi_R)=T^{-1/2}\sum_{t=1}^{(\sqrt{T})}{\rm RM}_{it}$, then
\begin{align}
	\begin{split}
	\frac{\partial {h}_{it}(\phi)}{\partial \phi}=& \frac{\partial \widehat {h}_{it}(\phi)}{\partial \phi}+(t-1) \beta^{t-2} {h}_{i1}(\phi) \mathbf{e}+\beta^{t-1} \frac{\partial {h}_{i1}(\phi)}{\partial \phi},\\
	\frac{\partial^{2}{h}_{it}(\phi)}{\partial \phi \partial \phi^{\top}}=& \frac{\partial^{2} \widehat {h}_{it}(\phi)}{\partial \phi \partial \phi^{\top}}+(t-1)(t-2) \beta^{t-3}{h}_{i1}(\phi) \mathbf{e e}^{\top} \\
	&+2 (t-1) \beta^{t-2} \frac{\partial {h}_{i1}(\phi)}{\partial \phi} \mathbf{e}^{\top}+\beta^{t-1} \frac{\partial^{2} {h}_{i1}(\phi)}{\partial \phi \partial \phi^{\top}},\\
	\vspace{10mm}
	\frac{\partial {\mu}_{it}(\phi_R)}{\partial \phi_R}=& \frac{\partial \widehat {\mu}_{it}(\phi_R)}{\partial \phi_R}+(t-1) \beta^{t-2} {\mu}_{i1}(\phi_R) \mathbf{e}+\beta_R^{t-1} \frac{\partial {\mu}_{i1}(\phi_R)}{\partial \phi_R}, \\
\frac{\partial^{2}{\mu}_{it}(\phi_R)}{\partial \phi_R \partial \phi_R^{\top}}=& \frac{\partial^{2} \widehat {\mu}_{it}(\phi_R)}{\partial \phi_R \partial \phi_R^{\top}}+(t-1)(t-2) \beta^{t-3}{\mu}_{i1}(\phi_R) \mathbf{e e}^{\top} \\
&+2 (t-1) \beta_R^{t-2} \frac{\partial {\mu}_{i1}(\phi_R)}{\partial \phi_R} \mathbf{e}^{\top}+\beta_R^{t-1} \frac{\partial^{2} {\mu}_{i1}(\phi_R)}{\partial \phi_R \partial \phi_R^{\top}},
\end{split}
\tag{A3}
\end{align}
where $\mathbf{e}=(0, 0, 0, 1)^{\top}$, and
\begin{align}
	\begin{split}
	&\sup _{\phi \in \Phi} \frac{1}{{h}_{it}(\phi)} \frac{\partial {h}_{it}(\phi)}{\partial \phi_{k}} \leq K, \\
	&\sup _{\phi \in \Phi} \frac{1}{{h}_{it}(\phi)} \frac{\partial {h}_{it}(\phi)}{\partial \beta} \leq K \sum_{j=1}^{\infty} \rho^{j s}\left\{\sup _{\phi \in \Phi} h_{i, t-j}(\phi)\right\}^{s}, 
	\end{split}
\tag{A4}
\end{align}
where $\phi_{k} \in \{ \omega, \alpha, \lambda\}$, by the inequality $x/ (1+x)\leq x^s$ for $x>0$ and $s\in (0, 1]$.
After the similar calculating, we have
\begin{align}
	\begin{split}
		&\sup _{\phi_R \in \Phi_R} \frac{1}{{\mu}_{it}(\phi_R)} \frac{\partial {\mu}_{it}(\phi_R)}{\partial {\phi_R}_{k}} \leq K, \\
		&\sup _{\phi_R \in \Phi_R} \frac{1}{{\mu}_{it}(\phi_R)} \frac{\partial {\mu}_{it}(\phi_R)}{\partial \beta_R} \leq K \sum_{j=1}^{\infty} \rho^{j s}\left\{\sup _{\phi_R\in \Phi_R} \mu_{i, t-j}(\phi_R)\right\}^{s}, 
	\end{split}
	\tag{A5}
\end{align}
where ${\phi_R}_{k} \in \{ \omega_R, \alpha_R, \lambda_R\}$.
	
We first provide two lemmas for the quasi-likelihood functions $\widehat {L}^{r}$ and ${L}^{r}$.

\begin{lemma}
	 If Assumption \ref{assumption1}-\ref{assumption2} hold, then
	
	(i). $\lim\limits _{T \rightarrow \infty} \sup\limits _{\phi \in \Phi}|L^r(\phi)-\widehat{L}^r(\phi)|=0$ a.s..
	
	(ii). ${\mathrm E}\left|\ell^r_{t}\left(\phi_{0}\right)\right|<\infty$, and ${\mathrm E} \ell_{t}^{r}(\phi) \geq {\mathrm E} \ell_{t}^{r}\left(\phi_{0}\right)$, where the equality holds if and only if $\phi=\phi_{0}$.
	
	(iii). Any $\phi \neq \phi_{0}$ has a neighourhood $V(\phi)$ such that
\begin{align*}
	\liminf _{T \rightarrow \infty} \inf _{\phi^{*} \in V(\phi)} \widehat{L}^r \left(\phi^{*}\right)>\mathrm {E} \ell_{t}^{r}\left(\phi_{0}\right) .
\end{align*}
\end{lemma}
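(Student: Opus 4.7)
The plan is to follow the standard three-step architecture for QMLE consistency of GARCH-type recursions (adapted here to the HEAVY recursion), where (i) handles the discrepancy between the feasible quasi-likelihood $\widehat{L}^r$ and its stationary counterpart $L^r$, (ii) establishes a Kullback--Leibler-type identifiability statement, and (iii) patches the two pieces together via a compactness argument. Throughout I will work on the compact parameter space $\Phi$ guaranteed by Assumption~\ref{assumption1}, so that $\beta$ and $1/\omega$ are uniformly bounded and $\rho:=\sup_{\phi\in\Phi}\beta<1$.

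\medskip

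For part (i), I would start from identity (A1), which yields
\begin{equation*}
\sup_{\phi\in\Phi}\bigl|h_{it}(\phi)-\widehat{h}_{it}(\phi)\bigr|
 \;\le\; \rho^{\,t-1}\bigl(|h_{i1}(\phi)|+|\widehat{h}_{i1}(\phi)|\bigr),
\end{equation*}
and since $\omega>0$ on $\Phi$, the ratio $h_{it}(\phi)/\widehat{h}_{it}(\phi)$ is uniformly pinned between two positive constants. Expanding
\begin{equation*}
\ell_t^r(\phi)-\widehat{\ell}_t^r(\phi)
= \tfrac1N\sum_i\left\{\log\tfrac{h_{it}(\phi)}{\widehat{h}_{it}(\phi)}
 + r_{it}^2\Bigl(\tfrac1{h_{it}(\phi)}-\tfrac1{\widehat{h}_{it}(\phi)}\Bigr)\right\},
\end{equation*}
bounding $|\log(1+x)|\le |x|$ and $|1/h-1/\widehat h|\le K|h-\widehat h|$, and using (A1) plus the crude bound $\widehat h_{i1}(\phi)\le K T^{-1/2}\sum_{t\le\sqrt T}r_{it}^2$, each term is dominated by $K\rho^{t-1}(1+T^{-1/2}\sum_{s\le\sqrt T}r_{is}^2)(1+r_{it}^2)$. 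Averaging over $t\in\{2,\dots,T\}$, the geometric factor kills the $T$ in the sum, and the ergodic theorem applied to $\{r_{it}^2\}$ shows the $T^{-1}$-average tends to $0$ almost surely, uniformly in $\phi$.

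\medskip

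For part (ii), integrability follows from $\omega\le h_{it}(\phi_0)$ together with $\mathrm{E}[r_{it}^2\mid\mathcal F_{t-1}^{\rm HF}]=h_{it}(\phi_0)$, which gives $\mathrm{E}\ell_t^r(\phi_0)=\tfrac1N\sum_i\mathrm{E}[\log h_{it}(\phi_0)]+1$, and the finiteness of $\mathrm{E}|\log h_{it}(\phi_0)|$ is standard once one notes $h_{it}(\phi_0)\ge\omega_0$ and $\mathrm{E}h_{it}(\phi_0)<\infty$ by the stationarity argument following (\ref{VARMA}). For the inequality, conditioning on $\mathcal F_{t-1}^{\rm HF}$,
\begin{equation*}
\mathrm{E}\bigl[\ell_t^r(\phi)-\ell_t^r(\phi_0)\bigr]
=\tfrac1N\sum_i\mathrm{E}\!\left[\log\tfrac{h_{it}(\phi)}{h_{it}(\phi_0)}
+\tfrac{h_{it}(\phi_0)}{h_{it}(\phi)}-1\right]\ge 0
\end{equation*}
by the elementary inequality $x-1-\log x\ge 0$, with equality iff $h_{it}(\phi)=h_{it}(\phi_0)$ a.s.\ for every $i$. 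To deduce $\phi=\phi_0$, I rewrite the recursion as
\begin{equation*}
(\beta_0-\beta)\,h_{i,t-1}(\phi_0)
=(\omega-\omega_0)+(\alpha-\alpha_0)\,{\rm RM}_{i,t-1}
+(\lambda-\lambda_0)\,d_i^{-1}\!\!\sum_{j\neq i}a_{ij}{\rm RM}_{j,t-1}.
\end{equation*}
Since $h_{i,t-1}(\phi_0)\in\mathcal F_{t-2}^{\rm HF}$ whereas the ${\rm RM}_{\cdot,t-1}$ contain the fresh noise $u_{iRt-1}$ (non-degenerate by Assumption~\ref{assumption2}), conditioning on $\mathcal F_{t-2}^{\rm HF}$ and equating random and deterministic components forces $\alpha=\alpha_0$, $\lambda=\lambda_0$ (at least one neighbour exists generically so the two coefficients are separated by the distinct regressors ${\rm RM}_{i,t-1}$ and the neighbour-average), and then $\beta=\beta_0$ and finally $\omega=\omega_0$.

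\medskip

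For part (iii), fix $\phi\neq\phi_0$ and choose an open neighbourhood $V(\phi)\subset\Phi$ on which $\phi^*\mapsto\mathrm{E}\ell_t^r(\phi^*)$ stays strictly above $\mathrm{E}\ell_t^r(\phi_0)$, which is possible because $\mathrm{E}\ell_t^r$ is continuous in $\phi$ (differentiation under the integral is justified by bounds (A4)) and, by (ii), is uniquely minimised at $\phi_0$. Then combine the strong uniform ergodic theorem for $L^r(\phi^*)\to\mathrm{E}\ell_t^r(\phi^*)$ (standard, since the $\phi^*$-family is stationary ergodic and continuous with integrable envelope via (A4)) with part (i) to conclude $\liminf_T\inf_{\phi^*\in V(\phi)}\widehat{L}^r(\phi^*)\ge\inf_{\phi^*\in V(\phi)}\mathrm{E}\ell_t^r(\phi^*)>\mathrm{E}\ell_t^r(\phi_0)$.

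\medskip

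The main obstacle I anticipate is the identifiability step inside (ii): the nontrivial initialisation $\widehat h_{i1}(\phi)=T^{-1/2}\sum_{t\le\sqrt T}r_{it}^2$ is comfortably killed by the geometric factor in (i) because $\rho^{\sqrt T}\cdot T^{1/2}\to 0$, so (i) is routine; what is delicate is separating $(\omega,\alpha,\lambda,\beta)$ from a single scalar identity, because $\alpha$ multiplies ${\rm RM}_{i,t-1}$ while $\lambda$ multiplies its neighbour average, and these two regressors can be collinear if $d_i=1$ and the single neighbour reproduces $i$'s own series. The Gershgorin/non-degeneracy argument above requires at least one node with $d_i\ge 2$ (or genuinely distinct neighbour innovations), and that is where care is needed to rigorously conclude identifiability.
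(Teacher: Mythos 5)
Your proposal is correct and follows exactly the standard three-step QMLE consistency argument (initialization negligibility via the geometric decay in (A1), a Kullback--Leibler identifiability step, and a compactness/uniform-ergodicity patch) that the paper itself invokes by simply citing the proof of Theorem~1 in Zhou et al.\ (2020) and omitting the details. The only remark worth adding is that your closing worry about collinearity between ${\rm RM}_{i,t-1}$ and its neighbour average is already ruled out by Assumption~\ref{assumption2}, since the innovations $u_{iRt}$ are assumed i.i.d.\ and nondegenerate \emph{across} $i$, so the own-series and neighbour-average regressors carry independent fresh noise whenever $d_i\ge 1$.
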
 

\begin{proof}
	The proof is similar to that of Theorem 1 in \cite{Zhou:2020}, and thus it is omitted.
\end{proof}

\begin{lemma}
	If the conditions in Theorem 1 hold, then
	
	(i). $\quad \sqrt{N T}\left\|\frac{\partial \widehat{L}^r\left(\phi_{0}\right)}{\partial \phi}-\frac{\partial L^r\left(\phi_{0}\right)}{\partial \phi}\right\|=o(1) \quad a . s .$;
	
	(ii). $\sup\limits_{\left\|\phi-\phi_{0}\right\|<\eta}\left\|\frac{\partial^{2} \widehat{L}^r(\phi)}{\partial \phi \partial \phi^{\top}}-\frac{\partial^{2} L^r\left(\phi_{0}\right)}{\partial \phi \partial \phi^{\top}}\right\|=O_{p}(\eta)$;
	
	(iii). $\frac{\partial^{2} L^r\left(\phi_{0}\right)}{\partial \phi \partial \phi^{\top}} \rightarrow \bm \Sigma^r$, where $\bm\Sigma^r=\frac{1}{N} \sum_{i=1}^{N} {\mathrm E}\left(\frac{1}{h_{i t}^2(\phi_0)} \frac{\partial h_{i t}\left(\phi_{0}\right)}{\partial \phi} \frac{\partial h_{i t}\left(\phi_{0}\right)}{\partial \phi^{\top}}\right)$;
	
	(iv). $\quad \sqrt{N T} \frac{\partial L^r\left(\phi_{0}\right)}{\partial \phi} \stackrel{d}{\longrightarrow} \mathcal{N}\left(0,\left(\kappa_{2}^r-1\right) \bm \Sigma^r\right)$, where $\bm\Sigma^r=\frac{1}{N} \sum_{i=1}^{N} {\mathrm E}\left(\frac{1}{h_{i t}^2(\phi_0)} \frac{\partial h_{i t}\left(\phi_{0}\right)}{\partial \phi} \frac{\partial h_{i t}\left(\phi_{0}\right)}{\partial \phi^{\top}}\right)$, and $\kappa_{2}^{r}={\mathrm E} (\varepsilon_{i t}^2)$.
\end{lemma}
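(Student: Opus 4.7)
The plan is to treat the four claims in turn, using three recurring ingredients: first, the geometric-decay identity implied by (A1), namely $\widehat{h}_{it}(\phi) - h_{it}(\phi) = \beta^{t-1}\bigl(\widehat{h}_{i1}(\phi) - h_{i1}(\phi)\bigr)$, together with the corresponding identities for the first and second derivatives via (A3); second, the strict stationarity and ergodicity of $\{{\rm RM}_{it}, r_{it}^2\}_{t}$ for each $i$, ensured by Assumption~\ref{assumption1} and the VARMA representation~(\ref{VARMA}); and third, the uniform bounds on standardized derivatives supplied by (A4). The moment and independence structure from Assumption~\ref{assumption2} then glues the pieces together.

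For (i), I would decompose
\[
\frac{\partial \widehat{L}^r(\phi_0)}{\partial \phi} - \frac{\partial L^r(\phi_0)}{\partial \phi} = \frac{1}{NT}\sum_{t=2}^{T}\sum_{i=1}^{N} \Delta_{it},
\]
where each $\Delta_{it}$ consists of pieces of the form $(\widehat{h}_{it}^{-1}-h_{it}^{-1})\partial h_{it}/\partial\phi$ and $r_{it}^2(\widehat{h}_{it}^{-2}-h_{it}^{-2})\partial h_{it}/\partial\phi$, plus analogous terms involving $\partial\widehat{h}_{it}/\partial\phi - \partial h_{it}/\partial\phi$. Each is bounded by a constant times $\beta^{t-1}$ (or $(t-1)\beta^{t-2}$ for the $\beta$-coordinate) multiplied by a random variable of finite expectation, so the $t$-sum collapses to a deterministic $O(1)$ factor and the whole quantity is $O_p(T^{-1})$; multiplying by $\sqrt{NT}$ yields $O_p(\sqrt{N/T}) = o(1)$ under the implicit joint-asymptotic assumption $N=o(T)$. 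Part (ii) follows by the same mechanism for the initial-value contribution, combined with a mean-value expansion for $\partial^2 L^r(\phi)/\partial\phi\partial\phi^{\top} - \partial^2 L^r(\phi_0)/\partial\phi\partial\phi^{\top}$, whose Lipschitz constant comes from a third-derivative bound analogous to (A4).

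For (iii), the sequence $\bigl\{h_{it}^{-2}(\phi_0)\,\partial h_{it}(\phi_0)/\partial\phi \cdot \partial h_{it}(\phi_0)/\partial\phi^{\top}\bigr\}_{t}$ is a measurable functional of the strictly stationary ergodic driving process, so the ergodic theorem gives almost-sure convergence of the $T$-average to its expectation, with uniform integrability inherited from (A4). For (iv), the essential structural fact is that $\partial \ell_t^r(\phi_0)/\partial\phi$ is a martingale difference with respect to $\mathcal{F}_{t-1}^{\rm HF}$, since the NHEAVY-r specification gives $\mathrm{E}(r_{it}^2 \mid \mathcal{F}_{t-1}^{\rm HF}) = h_{it}(\phi_0)$, whence
\[
\frac{\partial \ell_t^r(\phi_0)}{\partial \phi} = \frac{1}{N}\sum_{i=1}^{N}\frac{1}{h_{it}(\phi_0)}\frac{\partial h_{it}(\phi_0)}{\partial \phi}\left(1-\frac{r_{it}^2}{h_{it}(\phi_0)}\right)
\]
has zero conditional mean and conditional variance equal to $(\kappa_2^r-1)$ times the empirical analogue of $\bm\Sigma^r$. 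A martingale central limit theorem then delivers the Gaussian limit, once the Lindeberg condition is verified via a uniform $(2+\delta)$-th moment bound derived from the finite moments of $\varepsilon_{it}$ and the sup-bounds in (A4).

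The main obstacle is the joint $(N,T)$-asymptotic treatment. In (i), the initial-value error is only $o(1)$ when $N=o(T)$, so either this rate must be made explicit or a more refined argument showing that the error is in fact $o_p(\sqrt{T/N})$ must be supplied. In (iv), the martingale CLT has to be applied to a doubly indexed array, and the diagonalization of the limiting covariance hinges critically on the cross-sectional independence $\mathrm{E}\{\varepsilon_{it}\varepsilon_{jt}\mid\mathcal{F}_{t-1}^{\rm HF}\}=0$ for $i\neq j$ from Assumption~\ref{assumption2}; verifying that the resulting $N^{-1}\sum_i \mathrm{E}[\cdot]$ converges to a well-defined $\bm\Sigma^r$ as $N\to\infty$, and that the Lindeberg condition holds uniformly in $i$, is where I expect the most technical care will be required.
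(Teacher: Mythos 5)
Your proposal follows essentially the same route as the paper's proof: the same geometric-decay bound on the initial-value discrepancy from (A1)/(A3) for parts (i) and (ii), the ergodic theorem/SLLN applied to the $\varepsilon_{it}$-representation of the Hessian for (iii), and the martingale CLT applied to $\frac{1}{\sqrt{NT}}\sum_{t}\sum_{i}(1-\varepsilon_{it})h_{it}^{-1}\partial h_{it}/\partial\phi$ for (iv). The joint-asymptotic issue you flag (the initial-value error being $O_p(\sqrt{N/T})$, hence needing $N=o(T)$) is real but is also left implicit in the paper, which simply asserts almost-sure convergence to zero of the same bound.
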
	
	\begin{proof}
	By simply calculating the first derivative of $\widehat{L}(\phi)$ and ${L}(\phi)$, we have 
	\begin{align}
		\begin{split}
		\sqrt{N T}\left\|\frac{\partial \widehat{L}^r(\phi)}{\partial \phi}-\frac{\partial L^r(\phi)}{\partial \phi}\right\| & \leq \frac{1}{\sqrt{N T}} \sum_{i=1}^{N} \sum_{t=2}^{T}\left\|\frac{1}{\widehat{h}_{i t}(\phi)} \frac{\partial \widehat{h}_{i t}(\phi)}{\partial \phi}-\frac{1}{h_{i t}(\phi)} \frac{\partial h_{i t}(\phi)}{\partial \phi}\right\| \\
		&+\frac{1}{\sqrt{N T}} \sum_{i=1}^{N} \sum_{t=2}^{T}\left\|\frac{r_{i t}^{2}}{\widehat{h}_{i t}^{2}(\phi)} \frac{\partial \widehat{h}_{i t}(\phi)}{\partial \phi}-\frac{r_{i t}^{2}}{h_{i t}^2(\phi)} \frac{\partial h_{i t}(\phi)}{\partial \phi}\right\|, 
		\end{split}\tag{A6}
	\end{align}
The first term of the right hand in (A6) can have the following results,
\begin{align*}
	\frac{1}{\sqrt{N T}} & \sum_{i=1}^{N} \sum_{t=2}^{T}\left\|\frac{1}{\widehat{h}_{i t}(\phi)} \frac{\partial \widehat{h}_{i t}(\phi)}{\partial \phi}-\frac{1}{h_{i t}(\phi)} \frac{\partial h_{i t}(\phi)}{\partial \phi}\right\| \\
	\leq & \frac{K}{\sqrt{N T}} \sum_{i=1}^{N} \sum_{t=2}^{T}\left\|\frac{\partial \widehat{h}_{i t}(\phi)}{\partial \phi}-\frac{\partial h_{i t}(\phi)}{\partial \phi}\right\| \\
	&+\frac{1}{\sqrt{N T}} \sum_{i=1}^{N} \sum_{t=2}^{T}\left\|\left[h_{i t}(\phi)-\widehat{h}_{i t}(\phi)\right] \frac{1}{h_{i t}(\phi)} \frac{\partial h_{i t}(\phi)}{\partial \phi}\right\|.
\end{align*}
By the facts in (A1), we have
\begin{align*}
	&\frac{1}{\sqrt{N T}} \sum_{i=1}^{N} \sum_{t=2}^{T} \sup _{\theta \in \Theta}\left\|\frac{\partial \widehat{h}_{i t}(\phi)}{\partial \phi}-\frac{\partial h_{i t}(\phi)}{\partial \phi}\right\| \\
	&\leq \frac{1}{\sqrt{N}} \sum_{i=1}^{N} \sup _{\phi \in \Phi} h_{i 1}(\phi)\left(\frac{1}{\sqrt{T}} \sum_{t=2}^{T} (t-1) \beta^{t-2}\right) \\
	&\quad+\frac{1}{\sqrt{N}} \sum_{i=1}^{N} \sup _{\phi \in \Phi}\left\|\frac{\partial h_{i 1}(\phi)}{\partial \phi}\right\|\left(\frac{1}{\sqrt{T}} \sum_{t=2}^{T} \beta^{t-1}\right) \\
	&\rightarrow 0 \quad \text { a.s. }.
\end{align*}
Due to (A4) and Markov's inequality, we can have
\begin{align*}
\sum_{t=2}^{\infty} P\left({\beta^{t-1} \sup _{\phi \in \Phi}}\left\|\frac{1}{h_{i t}(\phi)} \frac{\partial h_{i t}(\phi)}{\partial \phi}\right\|>\epsilon\right) \leq \sum_{t=2}^{\infty} \frac{\beta^{t-1}}{\epsilon} \mathrm {E} \sup _{\phi \in \Phi}\left\|\frac{1}{h_{i t}(\phi)} \frac{\partial h_{i t}(\phi)}{\partial \phi}\right\|<\infty,
\end{align*}
Then, the following results can be easily derived,
\begin{align*}
	&\frac{1}{\sqrt{N T}} \sum_{i=1}^{N} \sum_{t=2}^{T} \sup _{\phi \in \Phi}\left\|\left[h_{i t}(\phi)-\widehat{h}_{i t}(\phi)\right] \frac{1}{h_{i t}(\phi)} \frac{\partial h_{i t}(\phi)}{\partial \phi}\right\| \\
	&\leq \frac{1}{\sqrt{N}} \sum_{i=1}^{N} \sup _{\phi \in \Phi} |h_{i 1}(\phi)- \widehat {h}_{i 1}(\phi)| \left(\frac{1}{\sqrt{T}} \sum_{t=2}^{T} \beta^{t-1} \sup _{\phi \in \Phi}\left\|\frac{1}{h_{i t}(\phi)} \frac{\partial h_{i t}(\phi)}{\partial \phi}\right\|\right) \rightarrow 0 \quad \text { a.s. }
\end{align*}
Thus, the first term of the right hand in (A6) convergence 0 a.s.. Similarly, we can prove the second term of the right hand in (A6) convergence 0 a.s.. Therefore, the proof of (i) is completed.

(ii) We first show
\begin{align*}
\sup _{\left\|\phi-\phi_{0}\right\|<\eta}\left|\frac{\partial^{2} \widehat{L}^r(\phi)}{\partial \beta^{2}}-\frac{\partial^{2} L^r \left(\phi_{0}\right)}{\partial \beta^{2}}\right|=O_{p}(\eta).
\end{align*}
Since
\begin{align*}
	\sup _{\left\|\phi-\phi_{0}\right\|<\eta}\left|\frac{\partial^{2} \widehat{L}^r(\phi)}{\partial \beta^{2}}-\frac{\partial^{2} L^r\left(\phi_{0}\right)}{\partial \beta^{2}}\right| \leq & \frac{1}{T} \sum_{t=2}^{T} \sup _{\phi \in \Phi}\left|\frac{\partial^{2} \widehat{\ell}_{t}^r(\phi)}{\partial \beta^{2}}-\frac{\partial^{2} \ell_{t}^r(\phi)}{\partial \beta^{2}}\right| \\
	&+\frac{1}{T} \sum_{t=2}^{T} \sup _{\left\|\phi-\phi_{0}\right\|<\eta}\left|\frac{\partial^{2} \ell_{t}^r(\phi)}{\partial \beta^{2}}-\frac{\partial^{2} \ell_{t}^r\left(\phi_{0}\right)}{\partial \beta^{2}}\right|.
\end{align*}
It is note that
\begin{align*}
	\frac{\partial^{2} \ell_{t}^r(\phi)}{\partial \beta^{2}}=\frac{1}{N} \sum_{i=1}^{N} &\left\{\frac{1}{h_{i t}(\phi)} \frac{\partial^{2} h_{i t}(\phi)}{\partial \beta^{2}}-\frac{1}{h_{i t}^{2}(\phi)} \frac{\partial h_{i t}(\phi)}{\partial \beta} \frac{\partial h_{i t}(\phi)}{\partial \beta}\right\} \\
	& \quad-\frac{1}{N} \sum_{i=1}^{N}\left\{\frac{r_{i t}^{2}}{h_{i t}^{2}(\phi)} \frac{\partial^{2} h_{i t}(\phi)}{\partial \beta^{2}}-2 \frac{r_{i t}^{2}}{h_{i t}^{3}(\phi)} \frac{\partial h_{i t}(\phi)}{\partial \beta} \frac{\partial h_{i t}(\phi)}{\partial \beta}\right\} .
\end{align*}
Then, we have
\begin{align*}
	\frac{1}{T} \sum_{t=2}^{T} & \sup _{\phi \in \Phi}\left|\frac{\partial^{2} \widehat{\ell}_{t}^r(\phi)}{\partial \beta^{2}}-\frac{\partial^{2} \ell_{t}^r(\phi)}{\partial \beta^{2}}\right| \\
	\leq & \frac{1}{N} \sum_{i=1}^{N} \frac{1}{T} \sum_{t=2}^{T} \sup _{\phi \in \Phi}\left|\frac{1}{h_{i t}(\phi)} \frac{\partial^{2} h_{i t}(\phi)}{\partial \beta^{2}}-\frac{1}{\widehat{h}_{i t}^{2}(\phi)} \frac{\partial^{2} \widehat{h}_{i t}(\phi)}{\partial \beta^{2}}\right| \\
	&+\frac{1}{N} \sum_{i=1}^{N} \frac{1}{T} \sum_{t=2}^{T} \sup _{\phi \in \phi}\left|\frac{1}{h_{i t}^{2}(\phi)} \frac{\partial h_{i t}(\phi)}{\partial \beta} \frac{\partial h_{i t}(\phi)}{\partial \beta}-\frac{1}{\widehat{h}_{i t}^{2}(\phi)} \frac{\partial \widehat{h}_{i t}(\phi)}{\partial \beta} \frac{\partial \widehat{h}_{i t}(\phi)}{\partial \beta}\right| \\
	&+\frac{1}{N} \sum_{i=1}^{N} \frac{1}{T} \sum_{t=2}^{T} \sup _{\theta \in \Theta}\left|\frac{r_{i t}^{2}}{h_{i t}^{2}(\phi)} \frac{\partial^{2} h_{i t}(\phi)}{\partial \beta^{2}}-\frac{r_{i t}^{2}}{\widehat{h}_{i t}^{2}(\phi)} \frac{\partial^{2} \widehat{h}_{i t}(\phi)}{\partial \beta^{2}}\right| \\
	&+\frac{2}{N} \sum_{i=1}^{N} \frac{1}{T} \sum_{t=2}^{T} \sup _{\theta \in \Theta}\left|\frac{r_{i t}^{2}}{h_{i t}^{3}(\phi)} \frac{\partial h_{i t}(\phi)}{\partial \beta} \frac{\partial h_{i t}(\phi)}{\partial \beta}-\frac{r_{i t}^{2}}{\widehat{h}_{i t}^{3}(\phi)} \frac{\partial \widehat{h}_{i t}(\phi)}{\partial \beta} \frac{\partial \widehat{h}_{i t}(\phi)}{\partial \beta}\right| \\
	:=& {\rm I}+\rm {II}+ \rm {III}+\rm {IV}.
\end{align*}
For the term ${\rm I}$, by the facts in (A3), we have
\begin{align}
	\begin{split}
	{\rm I} \leq & \frac{1}{N} \sum_{i=1}^{N} \frac{1}{T} \sum_{t=2}^{T} \sup _{\phi \in \Phi}\left|\frac{\left[h_{i t}(\phi)-\widehat{h}_{i t}(\phi)\right]}{\widehat{h}_{i t}(\phi) h_{i t}(\phi)} \frac{\partial^{2} h_{i t}(\phi)}{\partial \beta^{2}}\right| \\
	&+\frac{K}{N} \sum_{i=1}^{N} \frac{1}{T} \sum_{t=2}^{T} \sup _{\phi \in \Phi}\left|\frac{\partial^{2} h_{i t}(\phi)}{\partial \beta^{2}}-\frac{\partial^{2} \widehat{h}_{i t}(\phi)}{\partial \beta^{2}}\right| \\
	\leq & \frac{1}{N} \sum_{i=1}^{N} \sup _{\phi \in \Phi} |h_{i 1}(\phi) -\widehat {h}_{i 1}(\phi)| \frac{1}{T} \sum_{t=2}^{T} \beta^{t-1} \sup _{\phi \in \Phi}\left|\frac{\partial^{2} h_{i t}(\phi)}{\partial \beta^{2}}\right| \\
	&+\frac{1}{N} \sum_{i=1}^{N} \sup _{\phi \in \Phi} h_{i 1}(\phi)\left(\frac{1}{T} \sum_{t=2}^{T} (t-1)(t-2)\beta^{t-3}\right)\\
	&+\frac{2}{N} \sum_{i=1}^{N} \sup _{\phi \in \Phi}\left|\frac{\partial h_{i 1}(\phi)}{\partial \beta}\right|\left(\frac{1}{T} \sum_{t=1}^{T} (t-1) \beta^{t-2}\right)\\
	&+\frac{1}{N} \sum_{i=1}^{N} \sup _{\phi \in \Phi} \frac{\partial^{2} h_{i 1}(\phi)}{\partial \beta^{2}}\left(\frac{1}{T} \sum_{t=2}^{T} \beta^{t-1}\right) \rightarrow 0\quad \text { as } T \rightarrow \infty.
	\end{split}
\tag{A7}
\end{align}
Similarly, we can prove that $\rm{II}$, ${\rm III}$ and ${\rm IV}$ converge to 0 a.s.. Thus, 
\begin{align}
\frac{1}{T} \sum_{t=2}^{T} \sup _{\phi \in \Phi}\left|\frac{\partial^{2} \widehat{\ell}_{t}^r(\phi)}{\partial \beta^{2}}-\frac{\partial^{2} \ell_{t}^r(\phi)}{\partial \beta^{2}}\right| \rightarrow 0 \quad \text { as } T \rightarrow \infty. \tag{A8}
\end{align}
On the other hand, by the Taylor expansion, we have
\begin{align*}
\mathrm {E }\sup _{\left\|\phi-\phi_{0}\right\|<\eta}\left|\frac{\partial^{2} \ell_{t}^r(\phi)}{\partial \beta^{2}}-\frac{\partial^{2} \ell_{t}^r\left(\phi_{0}\right)}{\partial \beta^{2}}\right| \leq \eta \mathrm {E} \sup _{\left\|\phi-\phi_{0}\right\|<\eta}\left|\frac{\partial^{3} \ell_{t}^r(\phi)}{\partial \beta^{3}}\right|,
\end{align*}
By a simple calculation, it follows that
\begin{align*}
	\frac{\partial^{3} \ell_{t}(\phi)}{\partial \beta^{3}}=& \frac{1}{N} \sum_{i=1}^{N}\left\{2-\frac{6 r_{i t}^{2}}{h_{i t}(\phi)}\right\}\left[\frac{1}{h_{i t}(\phi)} \frac{\partial h_{i t}(\phi)}{\partial \beta}\right]^{3} \\
	&+\frac{1}{N} \sum_{i=1}^{N}\left\{1-\frac{r_{i t}^{2}}{h_{i t}(\phi)}\right\} \frac{1}{h_{i t}(\phi)} \frac{\partial^{3} h_{i t}(\phi)}{\partial \beta^{3}} \\
	&+\frac{1}{N} \sum_{i=1}^{N}\left\{\frac{6 r_{i t}^{2}}{h_{i t}(\phi)}-3\right\} \frac{1}{h_{i t}(\phi)} \frac{\partial h_{i t}(\phi)}{\partial \beta} \frac{1}{h_{i t}(\phi)} \frac{\partial^{2} h_{i t}(\phi)}{\partial \beta^{2}} .
\end{align*}
Similar to the proof of Lemma 2 in \cite{Zhou:2020}, it is not hard to show that 
$\mathrm {E} \sup\limits_{\left\|\phi-\phi_{0}\right\|<\eta}\left|\frac{\partial^{3} \ell_{t}(\phi)}{\partial \beta^{3}}\right|=O(1)$, which together with (A8), we can prove 
\begin{align*}
\sup _{\left\|\phi-\phi_{0}\right\|<\eta}\left|\frac{\partial^{2} \widehat{L}(\phi)}{\partial \beta^{2}}-\frac{\partial^{2} L\left(\phi_{0}\right)}{\partial \beta^{2}}\right|=O_{p}(\eta).
\end{align*}
Similary, we can show that
\begin{align*}
\sup _{\left\|\phi-\phi_{0}\right\|<\eta}\left|\frac{\partial^{2} \widehat{L}(\phi)}{\partial \phi_{i} \partial \phi_{j}}-\frac{\partial^{2} L\left(\phi_{0}\right)}{\partial \phi_{i} \partial \phi_{j}}\right|=O_{p}(\eta)
\end{align*}
where $\phi_{i}, \phi_{j} \in\{\omega, \alpha, \lambda, \beta\}$. Thus, (ii) holds.

(iii). By the simple calculation, we have
\begin{align*}
\frac{\partial^{2} \ell_{t}^r\left(\phi_{0}\right)}{\partial \phi \partial \phi^{\top}}=&\frac{1}{N} \sum_{i=1}^{N}\left\{\left(1-\varepsilon_{it}\right) \frac{1}{h_{i t}(\phi_0)} \frac{\partial^{2} h_{i t}\left(\phi_{0}\right)}{\partial \phi_{0} \partial \phi^{\top}}\right.\\
&\left.+\left(2 \varepsilon_{it}-1\right) \frac{1}{h_{i t}^{2}(\phi_0)} \frac{\partial h_{i t}\left(\phi_{0}\right)}{\partial \phi} \frac{\partial h_{i t}\left(\phi_{0}\right)}{\partial \phi^{\top}}\right\}.
\end{align*}
Since ${\varepsilon}_{it}$ is i.i.d with $\mathrm{E} \{{ \varepsilon}_{it}|\mathcal {F}_{t-1}^{\rm HF}\}=1$  and $\mathrm{var} \{{ \varepsilon}_{it}|\mathcal {F}_{t-1}^{\rm HF}\}={{\kappa}}_2^{r}$, by the strong law of large numbers, (iii) holds.

(iv). Note that
\begin{align*}
\sqrt{N T} \frac{\partial L^r\left(\phi_{0}\right)}{\partial \phi}=\frac{1}{\sqrt{N T}} \sum_{t=2}^{T} \sum_{i=1}^{N}\left(1-\varepsilon_{i t}\right) \frac{1}{h_{i t}(\phi_0) }\frac{\partial h_{i t}\left(\phi_{0}\right)}{\partial \phi}.
\end{align*}
By the martingale central limit theorem, we have
\begin{align*}
\sqrt{N T} \frac{\partial L^{r}\left(\phi_{0}\right)}{\partial \phi} \stackrel{d}{\longrightarrow} \mathcal{N}\left(0,\left(\kappa_{2}^{r}-1\right) \bm\Sigma^{r}\right),
\end{align*}
where $\bm\Sigma^r=\frac{1}{N} \sum_{i=1}^{N} {\mathrm E}\left(\frac{1}{h_{i t}^2(\phi_0)} \frac{\partial h_{i t}\left(\phi_{0}\right)}{\partial \phi} \frac{\partial h_{i t}\left(\phi_{0}\right)}{\partial \phi^{\top}}\right)$, and $\kappa_{2}^{r}={\mathrm E} (\varepsilon_{i t}^2)$.

\begin{lemma}
	If Assumption \ref{assumption1}-\ref{assumption2} hold, then
	
	(i). $\lim\limits _{T \rightarrow \infty} \sup\limits _{\phi_R \in \Phi_R}|L^{\rm RM}(\phi_R)-\widehat{L}^{\rm RM}(\phi_R)|=0$ a.s..
	
	(ii). ${\mathrm E}\left|\ell^{\rm RM}_{t}\left({\phi_{R}}_{0}\right)\right|<\infty$, and ${\mathrm E} \ell_{t}^{\rm RM}(\phi_R) \geq {\mathrm E} \ell_{t}^{\rm RM}\left({\phi_{R}}_{0}\right)$, where the equality holds if and only if $\phi_R={\phi_R}_{0}$.
	
	(iii). Any $\phi_{R} \neq {\phi_{R}}_{0}$ has a neighourhood $V(\phi_R)$ such that
	\begin{align*}
		\liminf _{T \rightarrow \infty} \inf _{\phi_{R}^{*} \in V(\phi_{R})} \widehat{L}^{\rm RM} \left(\phi_{R}^{*}\right)>\mathrm {E} \ell_{t}^{\rm RM}\left({\phi_{R}}_{0}\right) .
	\end{align*}
\end{lemma} 

\begin{proof}
	The proof is similar to that of Lemma 1.
\end{proof}

\begin{lemma}
	If the conditions in Theorem 1 hold, then
	
	(i). $\quad \sqrt{N T}\left\|\frac{\partial \widehat{L}\left({\phi_R}_{0}\right)}{\partial \phi}-\frac{\partial L\left({\phi_R}_{0}\right)}{\partial \phi}\right\|=o(1) \quad a . s .$;
	
	(ii). $\sup\limits_{\left\|\phi_R-{\phi_R}_{0}\right\|<\eta}\left\|\frac{\partial^{2} \widehat{L}(\phi_R)}{\partial \phi \partial \phi_{R}^{\top}}-\frac{\partial^{2} L\left({\phi_{R}}_{0}\right)}{\partial \phi_R \partial \phi_R^{\top}}\right\|=O_{p}(\eta)$;
	
	(iii). $\frac{\partial^{2} L\left({\phi_R}_{0}\right)}{\partial \phi_R \partial \phi_R^{\top}} \rightarrow \bm \Sigma^{R}$, where $\bm\Sigma^R=\frac{1}{N} \sum_{i=1}^{N} {\mathrm E}\left(\frac{1}{u_{i t}^2({\phi_R}_0)} \frac{\partial u_{i t}\left({\phi_R}_{0}\right)}{\partial \phi_R} \frac{\partial u_{i t}\left({\phi_R}_{0}\right)}{\partial \phi_{R}^{\top}}\right)$;
	
	(iv). $\quad \sqrt{N T} \frac{\partial L\left({\phi_{R}}_{0}\right)}{\partial \phi_R} \stackrel{d}{\longrightarrow} \mathcal{N}\left(0,\left(\kappa_{2}^{R}-1\right) \bm \Sigma^{R}\right)$ where $\kappa_{2}^{R}={\mathrm E} (\epsilon_{i t}^2)$, 
	
	\quad \quad and $\bm\Sigma^R=\frac{1}{N} \sum_{i=1}^{N} {\mathrm E}\left(\frac{1}{u_{i t}^2({\phi_R}_0)} \frac{\partial u_{i t}\left({\phi_R}_{0}\right)}{\partial \phi_R} \frac{\partial u_{i t}\left({\phi_R}_{0}\right)}{\partial \phi_{R}^{\top}}\right)$.
\end{lemma}	

\begin{proof}
The proof is similar in Lemma 2.
\end{proof}
\textbf{Proof of Theorem 1}
By a standard compactness argument, using Lemma 1 and Lemma 2, we have
\begin{align*}
	0=\frac{\partial {\widehat L}^{r}(\widehat{\phi})}{\partial \phi}=\frac{\partial {\widehat L}^{r}(\phi_0)}{\partial \phi}+\frac{\partial^2 {\widehat L}^{r}(\phi^{*})}{\partial \phi\partial \phi^{\top}}(\widehat{\phi}-\phi_0)
\end{align*}
where $\phi^{*}$ lies in $\widehat{\phi}$ and $\phi_0$, thus
\begin{align*}
	\sqrt{NT}(\widehat{\phi}-\phi_0)=&-\left(\frac{\partial^2 \widehat {L}^{r}(\phi^{*})}{\partial \phi\partial \phi^{\top}}  \right)^{-1}\sqrt{NT}\frac{\partial \widehat L^{r}(\phi_0)}{\partial \phi}\\
	=&-\left(\frac{\partial^2 L^{r}(\phi^{*})}{\partial \phi\partial \phi^{\top}}  \right)^{-1}\sqrt{NT}\frac{\partial L^{r}(\phi_0)}{\partial \phi}+o_p(1)\\
\stackrel{d}{\longrightarrow}& \mathcal{N}\left(0,\left(\kappa_{2}^r-1\right) {\bm\Sigma_{r}}^{-1}\right)
\end{align*}
Similarly, using Lemma 3 and Lemma 4, we have
\begin{align*}
	0=\frac{\partial {\widehat L}^{\rm RM}(\widehat{\phi}_R)}{\partial \phi_R}=\frac{\partial {\widehat L}^{r}({\phi_R}_0)}{\partial \phi_R}+\frac{\partial^2 {\widehat L}^{\rm RM}(\phi_R^{*})}{\partial \phi_R \partial \phi_R^{\top}}(\widehat{\phi}_R-{\phi_R}_{0})
\end{align*}
where $\phi_R^{*}$ lies in $\widehat{\phi}_R$ and ${phi_R}_0$, thus
\begin{align*}
	\sqrt{NT}(\widehat{\phi}_R-{\phi_R}_0)=&-\left(\frac{\partial^2 \widehat {L}^{\rm RM}(\phi_R^{*})}{\partial \phi_R\partial \phi_R^{\top}}  \right)^{-1}\sqrt{NT}\frac{\partial \widehat L^{\rm RM}({\phi_R}_0)}{\partial \phi_R}\\
	=&-\left(\frac{\partial^2 L^{\rm RM}(\phi_R^{*})}{\partial \phi_R \partial \phi_R^{\top}}  \right)^{-1}\sqrt{NT}\frac{\partial L^{\rm RM}({\phi_R}_0)}{\partial \phi_R}+o_p(1)\\
\stackrel{d}{\longrightarrow}& \mathcal{N}\left(0,\left(\kappa_{2}^{R}-1\right) {\bm\Sigma_{R}}^{-1}\right)
\end{align*}
Since $\theta=( \phi^{\top}, \phi_R^{\top})^{\top}$, by an straightforword application of quasi-likelihood theory \citep{Bollers:1992,Cipollini:2007}, we have
	\begin{align*}
	\sqrt{NT}(\widehat{\theta}-\theta_{0}) \stackrel{d}{\longrightarrow}N(0,\mathcal {I}^{-1} \mathcal {J}{( \mathcal {I}^{-1})}^{\top}),
\end{align*}
where
\begin{align*}
	\mathcal {I}= \begin{pmatrix}
		&{\bm\Sigma_{r}}& \mathbf{0}\\
		&\mathbf{0} & {\bm\Sigma_{R}}
	\end{pmatrix},
	\quad
	\mathcal{J}=\begin{pmatrix}
		&\left(\kappa_{2}^r-1\right) {\bm\Sigma_{r}}& \left(\kappa_{2}^{r,R}-1\right) {\bm\Sigma_{r,R}}\\
		& \left(\kappa_{2}^{r,R}-1\right) {\bm\Sigma_{r,R}}&(\kappa_{2}^{R}-1) {\bm\Sigma_{R}}
	\end{pmatrix}.
\end{align*}

\end{proof}
\end{proof}

\end{document}